 \newtheorem{theorem}{Theorem}[section]
 \newtheorem{lemma}[theorem]{Lemma}
  \newtheorem{claim}[theorem]{Claim}
 \newtheorem{corollary}[theorem]{Corollary}
\theoremstyle{definition}
 \newtheorem{definition}[theorem]{Definition}
\newif\ifqed
\def\GrabProofArgument[#1]{ #1: \egroup\ignorespaces}
\def\proof{\noindent\textbf\bgroup Proof%
	\@ifnextchar[{\GrabProofArgument}{. \egroup\ignorespaces}\global\qedtrue}
\def\qedhere{\ifmmode\tag*{\qedsign}\else\hspace*{\fill}\qedsign\medskip\fi\global\qedfalse}
\def\qedsign{$\Box$}
\newcommand{\OPT}{\ensuremath{\mathsf{OPT}}\xspace}
\newcommand{\PC}{\textsf{P}\xspace}
\newcommand{\NP}{\textsf{NP}\xspace}
\newcommand{\nph}{\textsf{NP-hard}\xspace}
\newcommand{\apxh}{\textsf{APX-hard}\xspace}
\newcommand{\ptas}{\textsf{PTAS}\xspace}
\newcommand{\psolve}{\textsf{P}\xspace}
\newcommand{\npsolve}{\textsf{NP}\xspace}
\newcommand{\opt}{\OPT}
\DeclareMathOperator{\dns}{dens}
\newcommand{\dens}[1]{\dns_{#1}}
\definecolor{mygreen}{RGB}{20,140,80}
\definecolor{mylightgray}{RGB}{230,230,230}
\definecolor{mygreen}{RGB}{20,140,80}
\definecolor{mydarkgray}{gray}{0.15} 
\newcommand{\citeboth}[1]{\hypersetup{citecolor=mydarkgray}\citeauthor{#1}\hypersetup{citecolor=mygreen} \cite{#1}}
\newcounter{proccnt}
\newcommand{\konote}[1]{}
\title{\Large Polynomial-time Approximation Scheme for Minimum $k$-cut\\in Planar and Minor-free Graphs}
\author{MohammadHossein Bateni\thanks{Google Research. Email: \texttt{bateni@google.com}} \\
\and
Alireza Farhadi\thanks{University of Maryland. Email: \texttt{\{farhadi, hajiagha\}@cs.umd.edu}}
	\thanks{Supported in part by NSF CAREER award CCF-1053605, NSF AF:Medium grant CCF-1161365, NSF BIGDATA grant IIS-1546108, NSF SPX grant CCF-1822738, and two small UMD AI in Business and Society Seed Grant and UMD Year of Data Science Program Grant} \\
\and
MohammadTaghi Hajiaghayi\footnotemark[2] \footnotemark[3] 
	}
\begin{document}
	\newcommand{\ignore}[1]{}
\renewcommand{\theenumi}{(\roman{enumi})}
\renewcommand{\labelenumi}{\theenumi.}
\sloppy

%
%

\date{}

\maketitle


\begin{abstract}
The $k$-cut problem asks, given a connected graph $G$ and a positive integer
$k$, to find a minimum-weight set of edges whose removal splits $G$
into $k$ connected components.
We give the first polynomial-time algorithm with approximation factor
$2-\epsilon$ (with constant $\epsilon > 0$) for the $k$-cut problem in
planar and minor-free graphs.  Applying more complex techniques, we further improve our method and
give a polynomial-time approximation scheme for the $k$-cut problem in both planar and minor-free graphs.
Despite persistent effort, to the best of our knowledge, this is the first improvement 
for the $k$-cut problem over standard approximation factor of $2$ in any major class of graphs.

\end{abstract}

\section{Introduction}

In the {\em $k$-cut problem}, given an undirected connected graph with edge
weights, the goal is to find a minimum-weight set of edges whose removal
splits the graph into $k$ connected components. The problem is
also called the {\em $k$-way cut problem} or the {\em
  multi-component cut}. This problem is a natural generalization of
the minimum cut problem in which we want to find a minimum-weight
set of edges whose removal splits the graph into two components.

\citeboth{goldschmidt1994polynomial} proved that the $k$-cut problem is \nph when $k$ is part
of the input. In the same work, they provided an $O(n^{k^2})$ algorithm for the $k$-cut problem
which is polynomial for every fixed $k$. Better algorithms have been proposed in a series of works \cite{kamidoi2006deterministic, xiao2008improved, karger1996new, thorup2008minimum}. As of today, the best algorithm for the minimum $k$-cut problem is by \citeboth{thorup2008minimum}, and has the running time of $O(n^{2k} \log n)$. Despite these improvements, this problem is proven to be W[1]-hard when $k$ is taken as a parameter \cite{downey2003cutting}. This hardness shows that no FPT algorithm parameterized by $k$ can solve this problem  unless \PC $=$ \NP.

In terms of approximation algorithms, several approximation algorithms are known for this problem \cite{saran1995finding, Naor:2001:TPA:365411.365415, ravi2002approximating, zhao2001approximating,xiao2011tight}, however the approximation ratio of none of them is better than $2-o(1)$. In fact, a recent result by \citeboth{manurangsi2017inapproximability} shows that this problem is \nph to approximate to within $2-\epsilon$ factor assuming Small Set Expansion Hypothesis. Also, to the best of our knowledge, prior to this work, 
there was no approximation algorithm with a ratio better than 2 for any major class of graphs.  It is also worth mentioning that a recent work by \citeboth{Gupta:2018:FAB:3174304.3175483} showed that using an FPT algorithm the approximation factor of $2$ can be beaten in the $k$-cut problem. They showed that there exists a $2-\epsilon$ approximation algorithm that runs in time $2^{O(k^6)} \tilde{O}(n^4)$. However, their algorithm is not polynomial when $k$ is part
of the input.

In this paper, we first show that surprisingly the
approximation guarantee of a natural greedy algorithm is $2-\epsilon$
in planar graphs as well as graphs excluding a fixed minor, for some positive constant $\epsilon$. Later, we show
how our method can be extended to
derive a \ptas for the $k$-cut problem in minor-free and planar graphs.
This is the first result that beats the approximation factor of $2$ in polynomial-time for a major class of graphs.

\begin{theorem}\label{thm:minor-free}
There exists a natural greedy algorithm (Algorithm~\ref{alg:greedy}) with the approximation ratio of $2-\epsilon$ in minor-free graphs. Also, the same algorithm achieves the approximation ratio of $1.9968...$ in planar graphs.
\end{theorem}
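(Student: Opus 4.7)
The plan is to analyze Algorithm~\ref{alg:greedy}, which I interpret as the classical iterative minimum-cut procedure: while the current graph has fewer than $k$ components, find a minimum cut separating some current component and commit it to the solution. Let $C_1, \dots, C_{k-1}$ be the cuts it picks in order and let $V_1, \dots, V_k$ denote the optimal components, with boundary cuts of non-decreasing weights $w_1 \le w_2 \le \cdots \le w_k$. A standard double-counting argument gives $\sum_{j=1}^k w_j \le 2 \cdot \OPT$. Moreover, at step $i$ the optimal boundary $\partial V_i$ remains a valid cut of the residual component containing $V_i$, so after appropriate reindexing $w(C_i) \le w_i$, and summing recovers the Saran--Vazirani bound of $(2 - 2/k)\,\OPT$. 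This already beats $2$ by a constant whenever $k$ is at most a constant $k_0$, so the real work is the regime $k > k_0$.

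For large $k$ I would use a dichotomy based on a threshold parameter $\gamma > 0$. Call iteration $i$ \emph{cheap} if $w(C_i) \le (1-\gamma)\, w_i$, and \emph{tight} otherwise. Cheap iterations save at least $\gamma w_i$ against the baseline $\sum_i w_i \le 2 \OPT$. The goal is to show that if many iterations are tight, then the baseline itself is strictly smaller than $2 \OPT$. Planarity and minor-freeness enter here: if $w(C_i) \ge (1-\gamma) w_i$, then $\partial V_i$ is a near-minimum cut of the residual component containing $V_i$. In a planar graph, such cuts are short cycles in the dual, and a packing argument for short dual cycles in a prescribed face arrangement implies that a positive fraction of the tight boundaries must share dual edges, yielding $\sum_j w_j \le (2 - \beta(\gamma))\, \OPT$ for some $\beta(\gamma) > 0$. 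In a minor-free graph, the analogous statement comes from a Klein--Plotkin--Rao-style decomposition, which bounds how many near-minimum cuts can be simultaneously edge-disjoint, albeit with weaker constants.

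Combining the two regimes, the total cost of the greedy algorithm is at most
\begin{equation*}
\sum_{i=1}^{k-1} w(C_i) \le (1-\gamma)\sum_{i \text{ tight}} w_i + \sum_{i \text{ cheap}} w_i \le \bigl(2 - \min\{\gamma,\beta(\gamma)\}\bigr)\,\OPT,
\end{equation*}
and optimizing over $\gamma$ gives the claimed $2 - \epsilon$ improvement. The explicit constant $1.9968$ in the planar case would arise by instantiating $\beta(\gamma)$ using sharp planar constants from short-cycle packing in the dual and numerically optimizing the resulting one-dimensional tradeoff.

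The main obstacle is the tight-iteration analysis: it amounts to converting the qualitative statement ``many optimal boundaries are near-minimum cuts of their residual components'' into a quantitative overlap bound on the edges of $\OPT$. In the planar case, the dual-cycle picture should make this tractable, but in the minor-free setting the argument must route through a Robertson--Seymour-style structure theorem (or a KPR-type decomposition), which is precisely why the minor-free bound leaves $\epsilon$ as an unspecified positive constant while the planar bound can be pinned down to an explicit numerical value.
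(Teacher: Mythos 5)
Your proposal does not match the paper's approach and contains a genuine error at its core. First, you misidentify the algorithm: Algorithm~\ref{alg:greedy} does not iteratively remove a minimum \emph{cut} (the Saran--Vazirani procedure), but a minimum-\emph{density} split of separation degree at most $3$, i.e., it also considers $3$-way splits and picks whichever of the $2$-way or $3$-way split has lower density. The entire point of the paper's argument is that $3$-way splits have strictly better density than $2$-way splits in minor-free graphs, and your analysis never uses the $3$-way option.

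Second, the dichotomy argument breaks on a bookkeeping error: you write that many tight iterations would force $\sum_j w_j \le (2-\beta(\gamma))\,\OPT$, but the sum $\sum_{j=1}^k w_j$ over the weights of the optimal boundaries $\partial V_j$ is \emph{exactly} $2\,\OPT$ by double counting --- every edge of the optimal cut lies in precisely two boundaries --- so this quantity cannot be pushed below $2\,\OPT$ under any structural hypothesis. The savings have to come from the greedy cuts $C_i$ being cheaper than the $w_i$, not from compressing $\sum_j w_j$ itself. The remaining ``near-minimum cuts must overlap in the dual'' and the KPR-decomposition sketches are not developed far enough to be checkable, and they are not pointed at the right target: what the paper actually needs (and proves as Lemma~\ref{lem:matching} and Theorem~\ref{thm:singlecut}) is that the \emph{contracted} graph on the $k$ optimal pieces, being $H$-minor-free and hence having bounded edge density $\beta(G')$, contains a heavy matching; a matching edge yields a cheap $3$-way split, which lowers the greedy's per-step density below $2/k$ and propagates through a combining lemma (Theorem~\ref{thm:union}) plus a small-split bound for the last step (Claim~\ref{claim:lastsplit}). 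Neither the bounded-average-degree observation, nor the matching argument, nor the density-combining machinery appears in your proposal, so the $2-\epsilon$ bound is not established.
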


The above theorem is proved in Section~\ref{sec:minor-free}.  Then we
move on to the result with the better guarantee.  The following is proved in Section~\ref{sec:ptas}.

\begin{theorem}
\label{thm:mainmain} There is a polynomial-time approximation scheme
(\ptas) for the $k$-cut problem in planar and minor-free graphs.
\end{theorem}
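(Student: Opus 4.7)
My plan is to upgrade the $2-\epsilon$ greedy algorithm of Theorem~\ref{thm:minor-free} into a full \ptas by combining three ingredients: a structure theorem for near-optimal $k$-cuts, a Klein-style spanner construction, and a dynamic program on the spanner. The constant-factor guarantee of Theorem~\ref{thm:minor-free} is essential as a starting point because it gives us an a priori polynomial bound on $\OPT$; this lets us charge auxiliary ``mortar'' edges we will add to the spanner against $\epsilon\cdot\OPT$ without overpaying.

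First, I would prove a structure theorem stating that for every $\epsilon>0$ there is a near-optimal $k$-cut $\OPTsolution$ of cost at most $(1+\epsilon)\OPT$ that is \emph{structured} with respect to a suitable brick decomposition of $G$. In the planar case I would invoke a mortar/brick decomposition (as in Klein's PTAS framework) obtained by laying down a sparse subgraph of cost $O(\eps)\cdot\OPT$ whose faces (bricks) have boundary complexity $O(1/\eps)$; using a Baker-style shifting argument over $O(1/\eps)$ choices of the decomposition, one choice must leave $\OPTsolution$ crossing the boundary of every brick only $O(1/\eps)$ times. For minor-free graphs, I would replace the planar mortar step by the Robertson--Seymour decomposition into clique sums of almost-embeddable pieces and apply the same shifting argument piece-by-piece, paying for apex and vortex vertices separately against $\OPT$.

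Next, inside every brick I would build a sparse spanner by adding, for each small subset $S$ of boundary vertices, a constant number of cheap connectors (e.g.\ shortest paths or mimicking-network edges) witnessing the $O(1/\eps)$ possible crossing patterns of $\OPTsolution$. The resulting spanner $H$ has three properties: (a) its total weight is $O(1/\eps)\cdot\OPT$, (b) it has treewidth $O(1/\eps)$, and (c) it still contains a $k$-cut of cost at most $(1+\eps)\OPT$. Property (c) is the subtle one because the $k$-cut objective is not monotone under contraction or deletion, so I would not literally delete edges; instead I would keep $H$ as a subgraph of $G$ and argue that re-routing each boundary crossing of $\OPTsolution$ inside a brick onto the installed connectors preserves both the cost (up to a $(1+\eps)$ factor) and the partition of the boundary into components, hence preserves that exactly $k$ components are produced globally.

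Finally, on the bounded-treewidth graph $H$ I would run a standard tree-decomposition dynamic program for minimum $k$-cut: at each bag the DP state records the partition of the bag vertices induced by the partial cut together with the number of completed components created so far, giving $k\cdot w^{O(w)}$ states per bag with $w=O(1/\eps)$, hence polynomial running time for every fixed $\eps$. Composing the three steps then returns a $k$-cut of cost at most $(1+\eps)\OPT$. I expect the main obstacle to be property (c) of the spanner: because $k$-cut behaves badly under local surgery, the re-routing lemma has to be carried out carefully so that boundary re-routing neither merges two distinct pieces of $\OPTsolution$ nor fragments a piece into more than one component, and this is where the constant-factor bound from Theorem~\ref{thm:minor-free} (used to control how many crossings we are allowed to re-route) and the brick structure have to be combined most delicately.
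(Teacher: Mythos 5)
You propose a Klein-style spanner/treewidth-DP pipeline, which is a fundamentally different route from the paper's, and the paper itself flags this route as problematic: it remarks that ``the standard spanner construction techniques and the small-treewidth reduction approach \dots\ seem challenging to use in this context'' precisely because $k$-cut is ``identity-relaxed'' --- there is no fixed terminal set. Your proposal does not overcome that obstacle. You correctly single out property~(c), the re-routing lemma, as the crux, but you treat it as a manageable subtlety rather than the step that breaks. For Steiner tree or multiway cut, re-routing a near-optimal solution onto mortar/brick connectors preserves a purely \emph{local} invariant on each brick boundary (which boundary vertices end up connected, or which terminal's side each boundary vertex is on). For $k$-cut the invariant that must be preserved is the \emph{global count} of connected components, which is not a function of the crossing pattern on any one brick boundary: replacing the cut's restriction to a brick by a different set of edges with the same boundary trace can merge two components that were joined only through that brick, or split one component into two, and there is no terminal set against which to detect or charge such mistakes. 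You offer no mechanism that controls this, and the $2-\eps$ bound from Theorem~\ref{thm:minor-free} does not supply one --- it bounds $w(\OPTsolution)$, but says nothing about how the pieces of $\OPTsolution$ partition a brick boundary. A second difficulty is that the mortar construction and Baker shifting assume the solution is a subgraph whose weight inside a shifted band can be charged; a $k$-cut is a set of deleted edges, and the planar-dual translation that rescues this for $s$-$t$ cut or multiway cut (cuts become cycles separating fixed faces) has no clean analogue when there are $k$ anonymous pieces.

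The paper's actual proof is of a completely different, and much more elementary, character. It analyzes the greedy Algorithm~\ref{alg:ptas}, which repeatedly removes a minimum-density split of separation degree at most $h(\eps)$ and then finishes with one exact small split. The key structural ingredient is Theorem~\ref{thm:low-density}: contracting the parts of any $k$-way split yields a $k$-vertex minor-free graph, and recursively applying the separator theorem (Lemma~\ref{lem:eps-sep}) shows this graph contains a split of separation degree $O(1/\delta^2)$ and density at most $(1+\eps)/k$. Theorem~\ref{thm:union} then shows that a union of sparse low-density splits is itself low-density, and Lemma~\ref{lemma:largercut} bounds the density loss in the final non-sparse step. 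No spanner, no mortar graph, no bounded-treewidth dynamic program, and no structure theorem for $\OPTsolution$ appears anywhere. So your plan is not a variant of the paper's proof; it is a different strategy whose hardest step is left unresolved and for which the paper itself argues the strategy is unlikely to work.
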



In our algorithm we often find minimum-weight splits (a split is subset of edges whose removal increases the number of connected components. Specifically, a $k$-way split is a split whose removal increases the number of connected components by $k-1$). Also, we work with notions of \emph{separation degree} and \emph{density of splits}. Roughly speaking, the \emph{separation degree} of a split is the number of the components that removal of a split adds to the graph. Additionally, the \emph{density} of a split is the total weight of the edges in a split divided by its separation degree. 
 
Although the density of a minimum split could be twice the density of the optimal solution, 
we show that the density of splits with a larger separation degree gets very close to the density of the optimal solution in planar and minor-free graphs.
Interestingly, the same does not hold in general graphs, 
where the density of arbitrary large splits may be as much as a factor $2-o(1)$ of the density of the optimal solution. For example, in an unweighted complete graph, the density of arbitrary large splits are $2-o(1)$ of the density of the optimal solution even if size of the splits are very large.

We show that a natural greedy algorithm that repeatedly picks a minimum-density split with a constant separation degree achieves an approximation ratio better than $2$. 
First, in order to introduce and highlight our main ideas, 
we consider the greedy algorithm which repeatedly picks a minimum-density split with a separation degree of at most $3$ and show that its approximation ratio is $2-\epsilon$ in minor-free graphs. Subsequently, we generalize our method to derive a polynomial-time approximation scheme (\ptas) in planar and minor-free graphs. 
\citeboth{saran1995finding} considers a similar greedy algorithm which successively removes the edges of a minimum cut. They showed that the approximation ratio of the greedy algorithm is $2-2/k$ in general graphs. Later, \citeboth{xiao2011tight} generalized this method by repeatedly removing the edges of a minimum $h$-way split. Although they find larger splits, they showed that the approximation ratio of this algorithm is about $2- h/k$, and it does not beat the approximation factor of $2$ by any constant factor.

In our first main result, we show that the approximation ratio of the simple greedy algorithm is better than $2$ by a constant factor in minor-free graphs. Our main observation is that in any balanced weighted graph, there exists a matching such that its weight is at least a constant fraction of the total weight of the graph. 
This result can also be viewed as a generalization of the work of \citeboth{nishizeki1979lower} in unweighted graphs. Later, we introduce a more profound analysis of our method to derive a \ptas in planar and minor-free graphs.

\subsection{Related Works}

A problem closely related to the $k$-cut problem is the multiway
cut problem. Given a set of $k$ vertices
called \emph{terminals}, in the multiway
cut problem, we want to find a minimum-weight cut that 
separates the terminals from one another.  The study of its
computational complexity was inaugurated in 1983 by Dahlhaus, Johnson,
Papadimitriou, Seymour, and
Yannakakis~\cite{dahlhaus1994complexity}\footnote{The work was first
  known in an unpublished but widely circulated extended
  abstract. Their complete paper was published in 1994.}.
  They provided a simple $2$-approximation algorithm for the multiway cut problem, and proved that the problem is \apxh for any fixed $k \geq 3$. However, in the case of planar graphs, they showed that the problem can be solved in a polynomial
  time  for fixed $k$ but is \nph when $k$ is part of the input. Surprisingly, as of today, it is not known whether $k$-cut problem is \nph in planar graphs.

The approximation factor of this problem improved in a sequence of works ~\cite{cualinescu1998improved,cunningham1999optimal,karger2004rounding}. As of today, the best approximation factor is $1.3438$ \cite{karger2004rounding}. In case of planar graphs, a very recent result by \citeboth{bateni2012polynomial} shows there exists a
\ptas for the multiway cut problem in planar graphs.

Another problem related to the $k$-cut problem is the {\em Steiner $k$-cut
problem}, which generalizes both the $k$-cut
problem and the multiway cut problem. Given an edge-weighted
undirected graph $G$, a subset of vertices $T$ called terminals, and
an integer $k \le |T|$, the objective is to find a minimum-weight set of edges
whose removal results in $k$ disconnected components, each 
containing at least one terminal. The best result known for this problem is a
$2-2/k$ approximation algorithm due to~\citeboth{chekuri2006steiner}.

We remark that the ``identity-relaxed'' variants of Steiner tree and
multiway cut problems, namely $k$-MST and $k$-cut, have been elusive
to date.  The latter problems allow us to pick the identity of $k$
``terminals'' to connect or separate, respectively.
The initial $2$-approximation algorithms for Steiner tree~\cite{GP68}
and multiway cut~\cite{dahlhaus1994complexity} were improved in a
series of work~\cite{cualinescu1998improved,KZ97,PS00,RZ05,Zelikovsky93} culminating in
a $1.3863$ approximation algorithm~\cite{BGRS13:steiner} for Steiner
tree and a $1.3438$ approximation algorithm for multiway
cut~\cite{karger2004rounding}.  Nonetheless, no approximation guarantee better
than $2 - o(1)$ is known for $k$-MST or $k$-cut. 

Similarly, in the case of planar graphs, where \ptas{}s are
known for Steiner tree~\cite{BorradaileKleinMathieu2009} and
multiway cut~\cite{bateni2012polynomial}, their identity-relaxed
variants (prior to this work) proved to be more resilient.
In particular, the standard spanner construction techniques and the
small-treewidth reduction approach developed and successfully applied
to a host of network design problems in the last decade \cite{bateni2016ptas, bateni2012polynomial, bateni2011approximation, BorradaileKleinMathieu2009, borradaile2015polynomial, borradaile2017minor, eisenstat2012efficient, klein2006subset, klein2008linear}, seem
challenging to use in this context. Recently, Cohen-Addad et
al.~\cite{CKM16} gave \ptas{}s for $k$-means and $k$-median, using
the \emph{local search method}.  (In their case, the
non-identity-relaxed variant where the $k$ ``centers'' are known is
trivial and not interesting to solve.)

\section{Preliminaries}
Let $G=(V,E;w)$ be an undirected graph where $w: E \rightarrow R^+$
is an assignment of weights to the edges of $G$. We use $V(G)$ and $E(G)$ to denote the vertices and edges of the graph $G$ respectively.
For each edge $e \in E$, we use $w(e)$ to denote the weight of $e$. 
Similarly, for a set of edges $E' \subseteq E$,
we use $w(E')$ to denote the total weight of the edges in $E'$, 
i.e., $w(E') = \sum_{e \in E'} w(e)$.
A graph $G$ is called \emph{normalized} if $w(E)=1$.
We denote the number of (connected) components in $G$ by $comp(G)$. 
Moreover, we use $\beta(G)=|E|/|V|$ to denote the ratio of the number
of edges in $G$ to its number of vertices.

A \textit{$k$-way cut} is a partition of $V$ into $k$ disjoint, nonempty sets
$V_1, V_2, \dots, V_k$, called \emph{parts}. We use $(V_1, V_2,\dots, V_k)$ to denote the cut. 
The weight of a $k$-way cut is the total weight of the edges whose endpoints
are in different parts. We denote the weight of the cut by $w(V_1, V_2,\dots, V_k)$.

For any subset $S \subseteq E$ of edges, we use $G- S$ to denote the graph
derived from $G$ by removing the edges in $S$.
We say that a edge set $S$ is a \textit{$k$-way split} in $G$
if $comp(G-S)= (k-1)+ comp(G)$.
Therefore, $k$-way splits and $k$-way cuts are equivalent in connected graphs.
We define the \emph{separation degree} of $S$ to be $k$.  We use $\dens{G}(S)$ to denote the \emph{density} of $S$ and define it as
$$\dens{G}(S) = w(S)/(k-1) \;.$$

A graph $G$ is called \emph{$H$-minor-free} if and only if the graph $H$ does not
appear as a minor of $G$; i.e., $H$ cannot be obtained via removing and
contracting edges and removing vertices in $G$. Note that \emph{planar graphs} are a special case of minor-free graphs as they do not have $K_5$ and $K_{3,3}$ minors.  In this paper, w.o.l.g., we assume that $H$ is a complete graph.
The following lemma directly uses a result by \citeboth{thomason1984extremal} to show that the number of edges in a minor-free graph is almost linear in its number of vertices.  The proof is deferred to the appendix.

\begin{restatable}{lemma}{maxedge}
\label{lem:maxedge}
For any $H$-minor-free graph $G$, we have
$\beta(G) \le (\gamma+o(1)) |V(H)| \sqrt{\ln|V(H)|}$, 
where $\gamma= 0.319...$ is an explicit constant. 
\end{restatable}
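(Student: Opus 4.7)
The lemma is essentially a direct transcription of Thomason's sharp extremal bound for complete minors into the paper's notation, so my plan has two short steps and no significant obstacle beyond verifying the WLOG reduction and matching the quantitative form.

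First I would justify the reduction to the complete case $H = K_t$ with $t = |V(H)|$, which the paper takes without loss of generality. Since $H$ is a subgraph of $K_t$, any graph that contains $K_t$ as a minor also contains $H$ as a minor (obtain an $H$-minor by simply deleting the extra edges from the minor model of $K_t$). Contrapositively, every $H$-minor-free graph is also $K_t$-minor-free, so the class of $H$-minor-free graphs is contained in the class of $K_t$-minor-free graphs, and any upper bound on edge density proved for the latter transfers to the former. In other words, an adversary seeking to maximize $\beta(G)$ while remaining $H$-minor-free can do no better than an adversary restricted to being $K_t$-minor-free.

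Second, I would apply Thomason's theorem~\cite{thomason1984extremal} (in its sharpened asymptotic form), which states that every $K_t$-minor-free graph $G$ satisfies $|E(G)|/|V(G)| \le (\gamma + o(1))\, t \sqrt{\ln t}$ as $t \to \infty$, where $\gamma = 0.319\ldots$ is Thomason's explicit constant. Plugging this into the definition $\beta(G) = |E(G)|/|V(G)|$ and using $t = |V(H)|$ immediately yields the claimed inequality $\beta(G) \le (\gamma + o(1))\, |V(H)| \sqrt{\ln |V(H)|}$.

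The only thing that requires any care is confirming the quantitative form of Thomason's statement matches the paper's convention (i.e., that his bound is on $|E|/|V|$ with the $o(1)$ taken as $t \to \infty$, and that the constant $\gamma = 0.319\ldots$ is exactly the one he computes). Beyond this bookkeeping, there is no genuine mathematical obstacle, which is why the proof is deferred to the appendix in the first place.
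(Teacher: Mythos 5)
Your proposal is correct and takes essentially the same approach as the paper: reduce to the complete-graph case by noting that $H$-minor-free implies $K_{|V(H)|}$-minor-free, then invoke Thomason's extremal bound. The paper phrases the reduction via the Hadwiger number $\eta(G) < |V(H)|$ and states Thomason's inequality in that form, but this is merely a notational variant of your argument.
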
 

In the paper, we often find minimum-weight splits. The following lemma shows that for any fixed $k$, a minimum $k$-way split can be found in a polynomial-time. The proof of this lemma can be found in the appendix.

\begin{restatable}{lemma}{ksplit}
Given a graph $G$ and a parameter $k$, there exists a polynomial time algorithm that finds a minimum $k$-way split in $G$. 
\end{restatable}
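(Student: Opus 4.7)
The plan is to reduce the problem to the minimum $k$-way cut problem on connected graphs. For that subproblem, the algorithm of \cite{thorup2008minimum} runs in time $O(n^{2k}\log n)$, which is polynomial for any fixed $k$.

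If $G$ is connected, then $\comp(G)=1$, so a $k$-way split is exactly a $k$-way cut, and Thorup's algorithm immediately yields the result. The remaining work is to handle disconnected $G$. Let the components of $G$ be $C_1,\dots,C_m$. Since every edge of $G$ lies inside a single component, a set $S\subseteq E$ is a $k$-way split if and only if its restriction $S_i := S \cap E(C_i)$ is a $k_i$-way cut of $C_i$ for some $k_i\ge 1$, and $\sum_{i=1}^m(k_i-1)=k-1$. The weight $w(S)$ decomposes as $\sum_i w(S_i)$, so minimizing $w(S)$ reduces to (a) computing a minimum $j$-way cut inside each component for each relevant $j$, and (b) choosing the best allocation of the budget $k-1$ among the components.

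First, I would compute, for each component $C_i$ and each $j\in\{1,\dots,\min(k,|V(C_i)|)\}$, the weight $c_i(j)$ of a minimum $j$-way cut of $C_i$, together with the cut itself, by invoking Thorup's algorithm. Because $k_i\le k$ for every $i$ (since $k_i-1\le k-1$), this amounts to at most $mk$ invocations, each polynomial in $n$ for fixed $k$. Next, a standard knapsack-style dynamic program over components picks values $(k_1,\dots,k_m)$ with $\sum_i(k_i-1)=k-1$ minimizing $\sum_i c_i(k_i)$; its table has size $O(mk)$ and is filled in polynomial time. Taking the union of the per-component cuts chosen by the DP gives a minimum-weight $k$-way split.

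The only real subtlety is in the disconnected case, and it is essentially notational: one must observe that since no edge crosses components, every $k$-way split decomposes into per-component cuts, and the DP considers all valid budget distributions so no combination is missed. The ``hard'' ingredient---actually computing a minimum $j$-way cut of a connected graph in polynomial time for fixed $j$---is fully supplied by the known result of \cite{thorup2008minimum}, which we invoke as a black box.
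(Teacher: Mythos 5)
Your proof is correct but takes a genuinely different route from the paper's. The paper handles the disconnected case by adding a single apex vertex $u$ joined by infinite-weight edges to one vertex of each component, yielding a connected graph $G'$ on which one invocation of a fixed-$k$ minimum $k$-cut algorithm (the paper cites Goldschmidt--Hochbaum) suffices; because the new edges have infinite weight, no minimum split ever uses them, so optima in $G'$ restrict exactly to optima in $G$. You instead decompose the split component by component, precompute a minimum $j$-way cut inside each component for each relevant $j$, and run a knapsack-style dynamic program to distribute the budget of $k-1$ extra components. Both reductions are correct and polynomial for fixed $k$. The paper's apex-vertex trick is shorter (a single black-box call, no DP, no explicit budget allocation), while yours is arguably more elementary in that it avoids infinite (or merely ``very large'') edge weights and makes the component-wise structure of a split fully explicit; your approach also avoids any concern about whether the black-box $k$-cut solver tolerates very large weights or implicitly assumes a connected input, since you only invoke it on connected components. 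One small point of rigor worth adding: you should note that if $k > |V(G)|$ no $k$-way split exists, and the DP naturally detects infeasibility when no valid allocation $(k_1,\dots,k_m)$ with $k_i \le |V(C_i)|$ and $\sum_i (k_i - 1) = k-1$ is available.
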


\section{Beating Approximation Factor of $2$ in Minor-free Graphs}\label{sec:minor-free}
\begin{algorithm} [t!]
 \KwData{An $H$-minor-free connected graph $G$, and integer $k$}
 \begin{algorithmic} [1]
 \STATE  $C = \emptyset$. 
 \WHILE {separation degree of $C$ is at most $k-4$}
 	
 	\STATE Let $G'= G - C$ be the graph obtained
        by removing all the previous cuts from $G$. \COMMENT{Note that the separation degree of $C$ is equal to the number of connected components in $G'$.}
 	\STATE Let $C'$ be a split in $G'$ whose density is minimum among all splits with the separation degree of at most $3$.
 	\STATE $C= C \cup C'$.
 \ENDWHILE
  \STATE Let $G'= G - C$ be the graph obtained
        by removing all the previous cuts from $G$.
 \STATE Let $d$ be the separation degree of $C$. 
 \STATE Let $C'$ be a minimum $(k-d+1)$-way split in $G'$. 
 \RETURN $C \cup C'$.
 \end{algorithmic}
\caption{$2-\epsilon$ Approximation Algorithm for Minor-free Graphs}
\label{alg:greedy}
\end{algorithm}
In this section, we provide a $2-\epsilon$ approximation algorithm
for the $k$-way cut problem in minor-free graphs. Recall that in the $k$-cut problem, we are given a  connected graph, and we want to find a minimum-weight set of edges whose removal splits the graph into $k$ connected components. 

Our algorithm repeatedly finds a split in our graph and removes its edges to increase the number of connected components. The algorithm consists of two phases. In the first phase, while the number of connected components in the graph is at most $k-4$, we find minimum $2$-way and $3$-way splits, pick the one who has the lowest density and remove its edges. Every time that we remove the edges of a either $2$-way split or $3$-way split, the number of connected components increases by at most $2$. 

In the second phase of the algorithm, if the current graph has $d$ connected components, we find a minimum $(k-d+1)$-way split and remove its edges. Removing edges of this split increases the number of connected components by $k-d$, therefore our final graph has $k$ connected components. 
We show that in minor-free graphs, 
the approximation ratio of this algorithm is better than $2$ by a constant factor.

Note that for a $k \le 4$, since the number of connected components in the original graph is $1$ which is larger than $k-4$, the algorithm skips the first phase, and finds a minimum $k$-way split at its only step. When $k >4$, the algorithm repeatedly finds a minimum $2$-way split or a minimum $3$-way split, and removes its edges. Removing the edges of a $2$-way split and a $3$-way split increases the number of connected components by $1$ and $2$ respectively. Therefore, in this case the number of connected components at the end of the first phase is either $k-3$ or $k-2$. It follows that in this case, the second phase of our algorithm either finds a minimum $3$-way split or a minimum $4$-way split. 

First, we show that the density of splits picked by the algorithm in its first phase is less than $2/k$ fraction of the optimal solution. Particularly, we show that the density of minimum-density split with the separation degree of at most $3$ 
is at most $(2-\epsilon)OPT/k$ where $OPT$ is the weight of the minimum $k$-way cut, and $\epsilon$ is a positive constant depending on $|V(H)|$. Later we use this theorem to show that the approximation ratio of
our algorithm is better than $2$. Our main tool is the following lemma which shows that if in a minor-free graph density of every $2$-way split is at least $(1+\delta)/n$ for some $\delta>0$, then the weight of the maximum weighted matching is at least a constant fraction of the total weight of the graph.
\begin{lemma}
\label{lem:matching}
Given a $\delta >0$, let $G=(V,E;w)$ be a connected normalized graph with $n$ vertices such that the density of every $2$-way split is least $(1+\delta)/n$, then the weight of a maximum weighted matching in $G$ is at least $\dfrac{\delta^2}{16 \beta(G) (1+\delta)}$.
\end{lemma}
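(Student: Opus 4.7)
The plan is to use LP duality between the maximum weighted fractional matching and the minimum weighted fractional vertex cover, combined with the bound that the integer maximum weighted matching is at least $2/3$ of the fractional optimum (from the half-integrality of vertex solutions of the matching polytope). Thus it suffices to lower bound $\sum_v y_v$ for every fractional vertex cover $y$, or equivalently to exhibit a fractional matching of value at least $3\delta^2/(32\beta(1+\delta))$.

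The first step is to extract local information from the cut condition. Applied to singletons, it gives $d(v) \ge (1+\delta)/n$ for every vertex $v$, where $d(v) = \sum_{e \ni v} w(e)$ is the weighted degree. Combined with $\sum_v d(v) = 2$, this yields $D := \max_v d(v) \le (1-\delta) + (1+\delta)/n$. The natural fractional matching $x_e = w(e)/D$ is then feasible, since $\sum_{e \ni v} x_e = d(v)/D \le 1$, and its value equals $\|w\|_2^2 / D$. This already yields one factor of $\delta$ (through $D$) and the $\beta^{-1}$ factor (through the minor-free bound $|E| \le \beta n$ inside a Cauchy–Schwarz bound on $\|w\|_2^2$).

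To get the second factor of $\delta$, the plan is to use the cut condition globally via a sweep argument on a minimizing fractional vertex cover $y$. Sort vertices by $y_v$ and consider the nested cuts $S_t = \{v : y_v \le t\}$; each non-trivial $S_t$ has $w(\partial S_t) \ge (1+\delta)/n$, and integrating in $t$ gives
\[
\sum_e w(e)\,|y_u - y_v| \;=\; \int w(\partial S_t)\,dt \;\ge\; \frac{(1+\delta)(t_{\max}-t_{\min})}{n}.
\]
The left-hand side is controlled by $\sum_v y_v \cdot d(v) \le D \sum_v y_v$, using feasibility $y_u + y_v \ge w(e)$. Carefully combining this sweep inequality with the singleton bound on $D$ (and with $|E| \le \beta n$) should yield $\sum_v y_v \ge \Omega(\delta^2/(\beta(1+\delta)))$, giving the claimed matching bound after dividing by the $3/2$ integrality gap and absorbing constants into the factor $16$.

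The main obstacle will be assembling the constants in the sweep argument so that the quadratic $\delta^2$ factor cleanly emerges. The intuition is that the cut condition is used twice — once locally (through $D$, bounding the denominator of the fractional matching) and once globally (through the sweep integral, controlling how spread out $y$ must be) — with minor-freeness entering only through $|E| \le \beta n$. If the single-step sweep does not quite yield the sharp constant, a fallback is a case split on the magnitude of $\delta$: for $\delta$ bounded away from $0$ the direct bound $x_e = w(e)/D$ with the small $D$ suffices, while for small $\delta$ the target $\delta^2/(16\beta(1+\delta))$ is itself small and can be met by the heaviest edge alone, since $w^* \ge 1/(\beta n)$ by averaging.
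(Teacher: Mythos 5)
Your plan has a genuine gap, and it is structural, not just in the constants.

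The first issue is the direct fractional-matching step. The fractional matching $x_e = w(e)/D$ has value $\|w\|_2^2/D$, and by Cauchy--Schwarz together with $|E|=\beta n$ you get $\|w\|_2^2 \ge 1/(\beta n)$. Your bound on $D$ is $D \le 1-\delta + (1+\delta)/n$, so the resulting lower bound on the fractional matching is
\[
\frac{\|w\|_2^2}{D} \;\ge\; \frac{1}{\beta\bigl(n(1-\delta) + (1+\delta)\bigr)} \,,
\]
which still has $n$ in the denominator whenever $\delta<1$. Since the target $\delta^2/(16\beta(1+\delta))$ is independent of $n$, this step alone gives nothing for large $n$; there is no ``one factor of $\delta$'' hiding in $D$ except in the limit $\delta\to 1$. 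The second issue is the sweep. As written it gives $\sum_v y_v \ge (1+\delta)(y_{\max}-y_{\min})/(nD)$, which is vacuous when the vertex cover is nearly constant, and you have not shown how the dual feasibility constraints $y_u+y_v\ge w(e)$ and the bound $|E|\le\beta n$ push $y_{\max}-y_{\min}$ (or compensate for it being small) enough to get an $n$-free lower bound. Finally, the fallback is simply wrong: for small $\delta$ and large $n$, the heaviest edge only gives $w^*\ge 1/(\beta n)\to 0$, but the target $\delta^2/(16\beta(1+\delta))$ is a fixed positive constant, so a single edge cannot meet it.

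For comparison, the paper's proof does not go through LP duality at all, and it crucially uses the \emph{combinatorial} degree (number of incident edges), not the weighted degree. It fixes a threshold $d = 4\beta(1+\delta)/\delta$, lets $A$ be the vertices of combinatorial degree $\ge d$ (so $|A|\le 2\beta n/d$ by Markov on $\sum_v \deg(v)=2|E|$) and $B=V\setminus A$, and uses the singleton cut condition, summed over $u\in B$, to force $w(E_B)\ge\delta/2$ where $E_B$ is the set of edges inside $B$. Since every vertex in $B$ has degree $<d$, a greedy matching inside $E_B$ removes at most $2d-1$ edges per step and hence extracts weight at least $w(E_B)/(2d-1)\ge \delta^2/(16\beta(1+\delta))$. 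The distinction between combinatorial and weighted degree is exactly what lets the greedy step avoid the $n$-dependence that sinks your fractional matching bound; your argument would need an analogous mechanism (e.g., discard the few vertices with high combinatorial degree, then run your fractional/LP machinery on the remaining bounded-degree subgraph), and as stated it does not have one.
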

\begin{proof}
Let $A$ be the set of vertices in $G$ whose degree is at least $d$ for some integer $d$. It follows that $|A| \le 2|E|/d$. Note that $|E|=n \beta(G)$. Therefore, $|A| \le 2n \beta(G)/d$. Let $B=V \setminus A$, then $|B| \ge n(1-2\beta(G)/d)$.  Let $E_B$ be the set of edges in $E$ whose both ends are in $B$, and $E_A = E \setminus E_B$ be all other edges. Setting $d=4\beta(G)(1+\delta)/\delta$, we claim that $w(E_B) \ge \delta/2$.

For the sake of contradiction suppose that $w(E_B) < \delta/2$. Then, we have
$$ w(E_A) = w(E)- w(E_B) > w(E)-\delta/2 \,.$$
Since $G$ is normalized, we have $w(E)=1$. Therefore,
$$
w(E_A) > 1- \delta/2 \,.
$$
For every vertex $u \in B$, let $C_u$ be a split that separates $u$ from all other vertices. Then the separation degree of $C_u$ is at least $2$. Considering all $C_u$ splits, each edge in $E_B$ appears in $2$ of these splits, and each edge in $E_A$ appears in at most one of them. Thus,
$$\sum_{u \in B} w(C_u) \le 2 w(E_B) +  w(E_A) \,.$$
Note that $w(E_B)+w(E_A) = w(E) = 1$. Therefore, we have
\begin{equation}
\label{matching:less}
\begin{split}
\sum_{u \in B} w(C_u) &\le 2 w(E_B) +  w(E_A) = 1+w(E_B) < 1+ \delta/2 \,.
\end{split}
\end{equation}
On the other hand, the separation degree of every $C_u$ is at least $2$. We argue that weight of all of them is at least $(1+\delta)/n$.  If the weight of one of them is less than $(1+\delta)/n$, then the weight of a minimum $2$-way split is also less than $(1+\delta)/n$ as well as its density, which is a contraction. Therefore, we have
$$
\sum_{u \in B} w(C_u) > \dfrac{1+\delta}{n} |B| \ge (1+\delta) (1-2\beta(G)/d) \,.
$$
Substituting $d$ for $4\beta(G)(1+\delta)/\delta$, gives us
\begin{equation}
\label{matching:more}
\sum_{u \in B} w(C_u) \ge 1+\delta/2 \,. 
\end{equation}
Inequality (\ref{matching:less}) together with (\ref{matching:more}) is a contradiction. Therefore, $w(E_B) \ge \delta/2$.

Now we find a weighted matching using the following greedy algorithm. 
\begin{enumerate} [label*=\arabic*.]
\item Let $T=E_B$ be the set all the edges in $E_B$, and $\mathcal{M}= \emptyset$ be our current matching.
\item Let $e \in T$ be a edge that has the maximum weight among all the edges in $T$.
\item Add $e$ to the matching, i.e., $\mathcal{M} = \mathcal{M} \cup \{e\}$. Also, remove $e$ and all the edges which are incident to $e$ from $T$.
\item While $|T| >0$, repeat steps 2-3.
\end{enumerate}
In each step, we pick an edge that has the maximum weight in $T$, add it to our current matching, and remove all the edges which are incident to this edge from $T$. Since the degree of every vertex in $B$ is at most $d$, every time we add an edge to our matching, we remove at most $2d-1$ edges from $T$. Let $e$ be the edge picked by the algorithm in one of its steps. $e$ has the maximum weight in $T$, thus the weight of each of the removed edges in this step is at most $w(e)$. Therefore,
$$w(\mathcal{M}) \ge \dfrac{w(E_B)}{2d-1} \ge  \dfrac{\delta/2}{2d}\,.$$
Replacing $d$, we have
$$w(\mathcal{M}) \ge \dfrac{\delta/2}{2d}  = \dfrac{\delta^2}{16\beta(G)(1+\delta)}\,.$$
Therefore, we have found a matching with the total weight at least $\dfrac{\delta^2}{16\beta(G)(1+\delta)}$, and it completes the proof. Note that in minor-free graphs by Lemma \ref{lem:maxedge}, $\beta(G)$ is at most a constant, therefore we have found a matching with a constant weight in $G$.
\end{proof}

Now we are ready to prove that the density of a split whose density is minimum among all the splits with the separation degree of at most $3$,
is at most $(2-\epsilon)/k$ fraction of the weight of minimum $k$-way split.

\begin{theorem}
Given an $H$-minor-free graph $G$ and an integer $k \ge 3$, 
let $S$ be a split with the minimum density among all the splits with the separation degree of at most $3$,
Then for any $k$-way split $S_k$, we have
\begin{align*}
\dens{G}(S) \le \dfrac{(2-\epsilon) w(S_k)}{k} \,,
\end{align*}
where $\epsilon>0$ is a constant depending on $|V(H)|$. 
\label{thm:singlecut}
\end{theorem}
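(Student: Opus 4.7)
I would pass to a simpler quotient graph and then argue by contradiction using Lemma~\ref{lem:matching}. First, contract every connected component of $G - S_k$ to a single vertex, yielding a multigraph $G'$ on exactly $k$ vertices whose total edge weight equals $w(S_k)$. Because $G'$ is a contraction of $G$, it is still $H$-minor-free, so by Lemma~\ref{lem:maxedge} the quantity $\beta(G')$ is bounded by a constant $c$ depending only on $|V(H)|$. Every 2-way (respectively 3-way) cut of $G'$ lifts to a 2-way (3-way) split of $G$ with the same weight, and densities are preserved; after rescaling by $1/w(S_k)$ it therefore suffices to produce, in the normalized quotient $G'$, a 2-way or 3-way split of density at most $(2-\epsilon)/k$.

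Suppose for contradiction that every 2-way and every 3-way split of $G'$ has density strictly greater than $(2-\epsilon)/k$. Writing $d(v)$ for the total weight of edges of $G'$ incident to $v$, the singleton split $\{v\} \mid V(G') \setminus \{v\}$ forces $d(v) > (2-\epsilon)/k$ for every $v$. In particular, every 2-way split has density at least $(1+(1-\epsilon))/k$, so Lemma~\ref{lem:matching} applied with $\delta = 1-\epsilon$ and $n = k$ delivers a matching $\mathcal{M}$ of $G'$ with
\begin{align*}
W \;:=\; w(\mathcal{M}) \;\ge\; \frac{(1-\epsilon)^2}{16\,c\,(2-\epsilon)}.
\end{align*}

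The heart of the argument is a single bookkeeping step that uses matched pairs and unmatched vertices as complementary witnesses. For each $(u,v) \in \mathcal{M}$, the 3-way split $\{u\} \mid \{v\} \mid V(G') \setminus \{u,v\}$ has weight $d(u)+d(v)-w(u,v)$, so the density hypothesis gives $d(u)+d(v) > 2(2-\epsilon)/k + w(u,v)$. Summing this over $\mathcal{M}$ and the singleton bound over the $k - 2|\mathcal{M}|$ unmatched vertices, and using that $\mathcal{M}$ is vertex-disjoint so that $\sum_{x \in V(\mathcal{M})} d(x) = \sum_{(u,v) \in \mathcal{M}} (d(u)+d(v))$, I get
\begin{align*}
2 \;=\; \sum_{v \in V(G')} d(v) \;>\; (k-2|\mathcal{M}|)\tfrac{2-\epsilon}{k} + 2|\mathcal{M}|\tfrac{2-\epsilon}{k} + W \;=\; (2-\epsilon) + W,
\end{align*}
so $W < \epsilon$. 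Combining with the Lemma's lower bound gives $(1-\epsilon)^2 < 16\,c\,\epsilon\,(2-\epsilon)$; at $\epsilon = 0$ the left side equals $1$ and the right side equals $0$, so by continuity the inequality fails for every sufficiently small positive constant $\epsilon = \epsilon(|V(H)|)$, producing the desired contradiction.

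The delicate point is precisely this accounting. Lemma~\ref{lem:matching} controls the \emph{weight} of $\mathcal{M}$ but not its cardinality, so the naive tactic of averaging 3-way densities over matching edges would require a separate linear-in-$k$ lower bound on $|\mathcal{M}|$ which is not available. The trick I use is to pair the 3-way constraints on matched pairs with singleton 2-way constraints on unmatched vertices so that the coefficient of $|\mathcal{M}|$ cancels exactly; what remains is a clean contradiction between $\sum_v d(v) = 2$ and the guaranteed matching weight $W$.
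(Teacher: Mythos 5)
Your argument is essentially identical to the paper's: the same contraction of $G-S_k$ to a $k$-vertex normalized quotient, the same application of Lemma~\ref{lem:matching} with $\delta = 1-\epsilon$, and the same pairing of 3-way bounds on matching edges with singleton 2-way bounds on unmatched vertices so that the $|\mathcal{M}|$ terms cancel and one concludes $w(\mathcal{M}) < \epsilon$, contradicting the lemma's lower bound for small enough $\epsilon$. Two minor points to patch: the edge set isolating a matched pair $\{u,v\}$ may have separation degree larger than~$3$, so to invoke a density hypothesis on $3$-way splits you should first extract a $3$-way subsplit (the paper sidesteps this by phrasing its inner contradiction directly in terms of weights, and extracting the $3$-way subsplit only at the end); and the theorem does not assume $G$ is connected, so the quotient need not have exactly $k$ vertices --- the paper handles this with a separate reduction that adds a dummy vertex joined to each component by infinite-weight edges.
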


\begin{proof}
First consider the case that $G$ is connected.
Let $P_1, P_2,\dots, P_k$ be the components in $G-S_k$. 
For each $P_i$ let $E_i$ be set of edges whose both ends are in $P_i$. 
We contract all the edges in $E_1, E_2, \dots, E_k$ to obtain the new graph $G'=(V',E';w')$. Also, we replace parallel edges with a single edge with the weight equal to sum of them. 
The graph $G'$ has exactly $k$ vertices each corresponding to a component in $G-S_k$. Furthermore, $G'$ is $H$-minor-free since it is derived by edge contradictions from $G$.
Moreover, every split in $G'$ corresponds to a split with the same separation degree and same weight
in $G$. 
Let $v_1, v_2, \dots, v_k$ be the vertices of $G'$, 
where $v_i$ is the vertex corresponding to $P_i$. 
For each vertex $v$ in $G'$, we use $c_v$ to denote 
the weight of the edges incident to $v$. 
It follows that for every vertex $v_i$, $c_{v_i} = w(P_i, V \setminus P_i)$. 
Also, 
$$ 
w(S_k) = \dfrac{\sum_{v \in V(G')} c_v}{2} \,. 
$$
Without loss of generality, we assume that $G'$ is normalized, i.e., $w(S_k) =1$.


If there exists a $2$-way split in $G'$ with the density of at most $(2-\epsilon)/k$, then the theorem clearly holds. Otherwise, we assume that the density of every $2$-way split is greater than $(2-\epsilon)/k$. For every vertex $v$ in $G'$, the separation degree of $S_v = (\{v\}, V(G') \setminus \{v\})$ is at least $2$, and it has a weight of $c_v$. For every vertex $v$, $c_v$ is at least $(2-\epsilon)/k$, otherwise, the weight of a minimum $2$-way split is less than $(2-\epsilon)/k$, and the graph has a $2$-way split with the density less than $(2-\epsilon)/k$ which is a contradiction. 

Graph $G'$ is a normalized, and the density of every $2$-way split is at least $(2-\epsilon)/k$. Therefore, by setting $\delta=1-\epsilon$, Lemma \ref{lem:matching} implies that $G'$ has a matching with the weight at least $\alpha=\dfrac{\delta^2}{16\beta(G')(1+\delta)}$.

Let $\mathcal{M}$ be the maximum weighted matching in $G'$. We have $w'(\mathcal{M}) \ge \alpha$. Since $G'$ is connected, we have 
$$\beta(G') \ge 1- 1/k \ge 2/3 \,.$$
Setting $\epsilon = 1/(35 \beta(G'))$, it is easy to verify that $\alpha  \ge \epsilon$ while $\beta(G') \ge 2/3$. Thus, the weight of $\mathcal{M}$ is at least $\epsilon$.

For every edge $(a,b)$ in $\mathcal{M}$, let $S_{(a,b)}=(\{a\}, \{b\}, V(G') \setminus \{a, b\})$ be a split that separates $a$ and $b$ from all other vertices and each other. We claim that the weight of at least one of these splits is at most $2(2-\epsilon)/k$. For the sake of contradiction, suppose that the weight of all of them is greater than $2(2-\epsilon)/k$. Let $U$ be the set of vertices which are not in $\mathcal{M}$. Recall that for every $v \in U$, $S_v$ is a split that separates $v$ from all other vertices and its weight is at least $(2-\epsilon)/k$. Therefore,
\begin{align*}
\sum_{(a,b) \in \mathcal{M}} w(S_{(a,b)}) &+ \sum_{v \in U} w(S_v) > \dfrac{2(2-\epsilon)}{k} |\mathcal{M}| + \dfrac{(2-\epsilon)}{k} |U| \,.
\end{align*}
We have $|U| = |V'| - 2|\mathcal{M}| = k - 2|\mathcal{M}|$. Therefore,
\begin{align}
\label{ieq:3split}
\sum_{(a,b) \in \mathcal{M}} w(S_{(a,b)}) + \sum_{v \in U} w(S_v) > \dfrac{2-\epsilon}{k}(2|\mathcal{M}|+|U|) = 2-\epsilon \,.
\end{align}
Every edge which is in the matching appears in one of these splits, and every other edge appears in two of them. Recall that the weight of the matching is at least $\epsilon$. Therefore,
\begin{align*}
\sum_{(a,b) \in \mathcal{M}} w(S_{(a,b)}) + \sum_{v \in U} w(S_v) \le 2w'(E') - w'(\mathcal{M}) \le 2-\epsilon \,,
\end{align*}
which contradicts (\ref{ieq:3split}). Therefore, there exists an edge $(a,b)$ in $\mathcal{M}$ such that the weight of $S_{(a,b)}$ is at most $2(2-\epsilon)/k$. The separation degree of $S_{(a,b)}$ is at least $3$. Therefore, the weight of a minimum $3$-way split is at most $2(2-\epsilon)/k$, and its density is at most $(2-\epsilon)/k$. This completes the proof for the case $G$ is connected with $\epsilon = 1/(35 \beta(G'))$ which is a constant by Lemma \ref{lem:maxedge}.  

In case $G$ is disconnected, we construct a graph $G'$ from $G$ as follows:
\begin{itemize}
\item Add all the edges in $G$ to $G'$.
\item Create a new vertex $u$.
\item For each component in $G$, add an edge in $G'$ with the weight of 
$\infty$ 
from $u$ 
to an arbitrary vertex in this component.
\end{itemize}

This procedure produces a connected graph $G'$.
Moreover, every $k$-way split in $G$ is also a $k$-way split in $G'$,
and minimum splits in $G'$ are also minimum splits in $G$ since weight of the new edges are $\infty$, and they are not in any minimum split.
Clearly all the newly added edges will be in the same component of $G' - S_k$, and the graph obtained by contracting all the edges whose both ends are in the same component
will remain $H$-minor-free. Let $G''$ be this graph.
Similarly, the theorem holds for $\epsilon = {1}/(35 \beta(G''))$.
\end{proof}

Now that we know there always exists a $3$-way or a $2$-way split of ``acceptable'' density,
we show that the density of a split that algorithm picks in its second phase is also ``acceptable''. Recall that for a $k>4$, the separation degree of the split that Algorithm \ref{alg:greedy} picks in its second phase is either $3$ or $4$. The following claim shows how the density of minimum splits changes if we increase their separation degree. 

\begin{lemma}
\label{lemma:largercut}
Given a connected normalized graph $G=(V,E;w)$ with $k$ vertices, $\delta \ge 0$ and $h<k$, let $S$ be a $h$-way split such that $\dens{G}(S) \le (1+\delta)/k$. Then, the density of a minimum $(h+1)$-way split is at most 
$$
\dfrac{1+\delta}{k}+\dfrac{1-\delta}{h k} \,.
$$  
\end{lemma}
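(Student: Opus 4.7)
The plan is to construct an $(h{+}1)$-way split by refining one component of $G - S$ via a minimum cut. Let $C_1, \dots, C_h$ be the connected components of $G - S$, with $n_i = |C_i|$ and $w_i$ the total edge-weight inside $C_i$; then $\sum_i n_i = k$, $\sum_i w_i = 1 - w(S)$, and at most $h-1$ of the $C_i$ can be singletons (since $h < k$ rules out all of them being singletons), so the total number of vertices lying in non-singleton components satisfies $k' \ge k - h + 1$.

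For each $C_i$ with $n_i \ge 2$, let $\lambda_i$ denote the weight of a minimum cut of the connected subgraph $C_i$. A standard argument shows that removing a minimum cut from a connected graph leaves exactly two components, so the union of $S$ with a min cut of $C_i$ is a genuine $(h{+}1)$-way split of $G$ of weight $w(S) + \lambda_i$. Isolating any single vertex $v \in C_i$ gives $\lambda_i \le d_v(C_i)$, where $d_v(C_i)$ is the weight of edges from $v$ to the rest of $C_i$; averaging over $v \in C_i$ yields $\lambda_i \le 2 w_i / n_i$. The minimum of $2 w_i/n_i$ over non-singleton $i$ is at most the $n_i$-weighted average,
\begin{equation*}
\min_{i : n_i \ge 2} \frac{2 w_i}{n_i} \le \frac{\sum_{i : n_i \ge 2} 2 w_i}{k'} = \frac{2(1-w(S))}{k'} \le \frac{2(1-w(S))}{k - h + 1},
\end{equation*}
where the first equality uses that singleton components contribute $w_i = 0$. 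Hence there exists an $(h{+}1)$-way split of weight at most $w(S) + \frac{2(1-w(S))}{k-h+1}$.

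The remaining step is purely algebraic. Because $k \ge h+1$, the function $w \mapsto w + \frac{2(1-w)}{k-h+1}$ is non-decreasing in $w$, so it is maximized at the largest admissible value $w(S) = (h-1)(1+\delta)/k$ permitted by the hypothesis $\dens{G}(S) \le (1+\delta)/k$. Substituting this value and clearing denominators reduces the desired inequality
\begin{equation*}
w(S) + \frac{2(1-w(S))}{k-h+1} \le \frac{h(1+\delta)+1-\delta}{k}
\end{equation*}
to the bound $2(h-1)\delta \ge 0$, which is trivial. Dividing the resulting $(h{+}1)$-way split weight by $h$ then yields the claimed density bound $\frac{1+\delta}{k} + \frac{1-\delta}{hk}$. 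The only subtlety to watch is the singleton count (handled by $k' \ge k - h + 1$); the conceptual content is simply that the ``leftover'' weight $1 - w(S)$ sitting inside the components guarantees some component admits a cheap refinement, and the rest is bookkeeping.
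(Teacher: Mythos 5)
Your proof is correct and follows essentially the same route as the paper's: you reprove inline (component-by-component) what the paper isolates as Claim~\ref{claim:two-split} — that averaging the degree over non-isolated vertices of $G-S$ yields a cheap $2$-way split of weight at most $2(1-w(S))/(k-h+1)$ — and then union it with $S$. The closing algebra is the same computation, just organized via the clean observation that $w \mapsto w + 2(1-w)/(k-h+1)$ is non-decreasing so the extremal case is $w(S)=(h-1)(1+\delta)/k$.
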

\begin{proof}
Let $G'=(V, E';w)$ be the graph obtained by removing all the edges in $S$ from $G$. $G'$ has $h$ connected components. Also, $w(E')= 1-w(S)$. In the following claim we show that there is a $2$-way split in $G'$ with the weight at most $2w(E')/(k-h+1)$. The proof of this claim can be found in the appendix.
\begin{restatable}{claim}{twosplit}
\label{claim:two-split}
Let $G=(V,E;w)$ be a graph with $k$ vertices and $h$ connected components where $h < k$, then there exists a $2$-way split with the weight of at most $2w(E)/(k-h+1)$.
\end{restatable}

Let $S'$ be a minimum $2$-way split in $G'$. By Claim \ref{claim:two-split}, we have
$$
w(S') \le \dfrac{2w(E')}{k-h+1} \,.
$$
Let $S''= S \cup S'$ be a $(h+1)$-way split, then we have
\begin{align*}
w(S'') &= w(S) + w(S') \le w(S) ( 1 - \dfrac{2}{k-h+1}) + \dfrac{2}{k-h+1} \,.
\end{align*}
The weight of $S$ is $w(S) = (h-1)\dens{G}(S) \le (h-1)(1+\delta)/k$. Recall that $h<k$. Therefore, $1-2/(k-h+1) \ge 0$, and we have
$$
w(S'') \le \dfrac{(h-1)(1+\delta)}{k} (1 - \dfrac{2}{k-h+1})+ \dfrac{2}{k-h+1} \,.
$$
So,
\begin{align*}
w(&S'') \le \dfrac{(h-1)(1+\delta)}{k} + \dfrac{2}{k-h+1} (1-\dfrac{(h-1)(1+\delta)}{k})  \,.
\end{align*}
Thus, the density of $S''$ is at most
\begin{align*}
\dens{G}(S'') = \dfrac{w(S'')}{h}\le \dfrac{(h-1)(1+\delta)}{h k} + \dfrac{2}{h(k-h+1)} (1-\dfrac{(h-1)(1+\delta)}{k}) \,.
\end{align*}
Therefore,
\begin{align*}
&\dens{G}(S'') \le \dfrac{(h-1)(1+\delta)}{h k} + \dfrac{2}{h(k-h+1)} (\dfrac{k-(h-1)(1+\delta)}{k}) \,.
\end{align*}
Since $k-h+1 \ge k-(h-1)(1+\delta)$, we have
\begin{align*}
\dens{G}(S'') \le \dfrac{(h-1)(1+\delta)}{h k} + \dfrac{2}{h k}= \dfrac{1+\delta}{ k} + \dfrac{1-\delta}{h k} \,.
\end{align*}
\end{proof} 
Now we can prove that the density of minimum $3$-way and minimum $4$-way splits are also less than $2/k$ fraction of the optimal solution.
\begin{claim}
\label{claim:lastsplit}
Given an $H$-minor-free graph $G$ and any $k$-way split $S_k$, the density of minimum $3$ and $4$-way splits in $G$ are at most $(2-\epsilon/2)w(S_k)/k$ and $(2-\epsilon/3)w(S_k)/k$ respectively, if the separation degree of $S_k$ is at least $3$ and $4$ respectively.
\end{claim}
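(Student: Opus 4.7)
The plan is to combine Theorem \ref{thm:singlecut} with Lemma \ref{lemma:largercut} in a direct two-step fashion. First I would pass to the normalized contracted graph $G'$ constructed in the proof of Theorem \ref{thm:singlecut}: contracting each component of $G - S_k$ and normalizing so that $w(S_k)=1$ yields a connected $H$-minor-free graph on exactly $k$ vertices (where $k$ is the separation degree of $S_k$), with $w(E(G'))=w(S_k)=1$. Because every split in $G'$ corresponds to a split in $G$ of the same separation degree and same weight, it suffices to prove the two density bounds in $G'$ with $w(S_k)$ replaced by $1$, and then rescale.

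For the $3$-way bound, Theorem \ref{thm:singlecut} gives a split $S$ of separation degree at most $3$ in $G'$ with $\dens{G'}(S)\le (2-\epsilon)/k$. If $S$ is itself $3$-way we are already done since $(2-\epsilon)/k \le (2-\epsilon/2)/k$. Otherwise $S$ is a $2$-way split, and I feed it into Lemma \ref{lemma:largercut} with $h=2$ and $\delta=1-\epsilon$ (legal because $k\ge 3$), producing a $3$-way split of density at most
\[
\frac{1+\delta}{k}+\frac{1-\delta}{hk}=\frac{2-\epsilon}{k}+\frac{\epsilon}{2k}=\frac{2-\epsilon/2}{k}.
\]

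For the $4$-way bound, I would iterate the same trick one more time. Starting from the $3$-way split of density at most $(2-\epsilon/2)/k$ just obtained, I apply Lemma \ref{lemma:largercut} with $h=3$ and $\delta=1-\epsilon/2$, which is permitted precisely because $S_k$ has separation degree at least $4$ (i.e., $k>h$). The resulting $4$-way split has density at most
\[
\frac{2-\epsilon/2}{k}+\frac{\epsilon/2}{3k}=\frac{2-\epsilon/3}{k}.
\]
Multiplying both bounds by $w(S_k)$ undoes the normalization and yields the claimed $(2-\epsilon/2)w(S_k)/k$ and $(2-\epsilon/3)w(S_k)/k$ bounds in $G$.

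The proof is essentially mechanical once Theorem \ref{thm:singlecut} and Lemma \ref{lemma:largercut} are in hand; the only thing to watch is the bookkeeping, namely that $G'$ is connected, normalized, and has exactly $k$ vertices so that the hypotheses of Lemma \ref{lemma:largercut} apply, and that the separation-degree assumptions on $S_k$ are precisely what is needed to guarantee $h<k$ in each invocation. The constants $(2-\epsilon/2)$ and $(2-\epsilon/3)$ arise from the two algebraic simplifications above and should not require any additional ideas beyond chaining the two lemmas.
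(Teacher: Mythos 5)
Your proof is correct and follows essentially the same two-step chain as the paper's: contract and normalize to get $G'$ on $k$ vertices, invoke Theorem~\ref{thm:singlecut} for the initial low-density split, then apply Lemma~\ref{lemma:largercut} once with $\delta=1-\epsilon$ (and $h=2$) to reach a $3$-way split of density at most $(2-\epsilon/2)/k$, and a second time with $\delta=1-\epsilon/2$ (and $h=3$) to reach a $4$-way split of density at most $(2-\epsilon/3)/k$. The only (cosmetic) difference is that you explicitly note the trivial sub-case in which the split from Theorem~\ref{thm:singlecut} is already $3$-way, which the paper leaves implicit.
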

\begin{proof}
As we discussed in the proof of Theorem \ref{thm:singlecut}, we can assume w.l.o.g. that $G$ is connected. We contract all the edges which are not in $S_k$ to get a new minor-free graph $G'=(V',E',w')$. The total weight of the edges in $G'$ is equal to the weight of $S_k$. W.l.o.g., we can assume that the graph $G'$ is normalized, i.e., $w(S_k) = w'(E')=1$.

First, we prove our claim for a minimum $3$-way split. By Theorem \ref{thm:singlecut}, there exists a split with a separation degree of at most $3$ and density of at most $(2-\epsilon)/k$ in $G'$. Let $S$ be this split. If the separation degree of $S$ is $3$, then our claim is proved. Otherwise, we assume that the separation degree of $S$ is $2$. Setting $\delta=1-\epsilon$, by Lemma \ref{lemma:largercut} the density of a minimum $3$-way split is at most
$$
\dfrac{1+\delta}{k} + \dfrac{1-\delta}{2k} = \dfrac{2-\epsilon}{k} + \dfrac{\epsilon}{2k} =\dfrac{2-\epsilon/2}{k} \,.
$$

Now we consider a minimum $4$-way split. We know that there exists a $3$-way split with a density of at most $(2-\epsilon/2)/k$. Setting $\delta=1-\epsilon/2$, and applying Lemma \ref{lemma:largercut}, it gives us that the density of a minimum $4$-way split is at most
\begin{align*}
\dfrac{1+\delta}{k} + \dfrac{1-\delta}{3k} = \dfrac{2-\epsilon/2}{k} + \dfrac{\epsilon/2}{3k} =\dfrac{2-\epsilon/3}{k} \,.
\qedhere
\end{align*}
\end{proof}

Note that in the first phase of  Algorithm \ref{alg:greedy}, the algorithm considers the splits whose separation degree is at most $3$, and picks the one with the lowest density. Thus, if the algorithm picks a split in its first phase, it is guaranteed that no split with a lower separation degree has a lower density. We call these splits, \emph{sparse}. Specifically, we define sparse splits as below.

\begin{definition} [Sparse split]
In a graph $G=(V,E;w)$, 
an $h$-way split $S$ is called \emph{sparse} 
if for any $h' \le h$ and $h'$-way split $S'$, the following holds.
$$ \dens{G}(S) \le \dens{G}(S') \,.$$
\end{definition}

The following theorem shows that combining some low-density sparse splits, results in a low-density split.

\begin{theorem}
\label{thm:union}
Let $G=(V,E;w)$ be a weighted graph, $S$ be a $k$-way split in $G$, and $a_1, a_2, \ldots a_l$ be integers such that $\sum_{i=1}^{l} a_i < k$. Let $C_1,C_2, \ldots, C_l$ be $l$ splits where $C_i$ is a minimum $(a_i+1)$-way split in $G_i = G- \bigcup_{j=1}^{i-1} C_j$. Let $S_i = S \setminus  \bigcup_{j=1}^{i-1} C_j$ be a $b_i$-way split in $G_i$ for every $1 \le i \le l$. Given a $\delta \ge 0$, suppose that for every $C_i$, we have
$$\dens{G}(C_i) \le \dfrac{(1+\delta) w(S_i) }{b_i} \,.$$
Also, suppose that $C_i$ is sparse in $G_i$ for every $i<l$.
Then,
$$\dens{G} ({\bigcup_{i=1}^{l} C_i}) \le \dfrac{(1+\delta) w(S)}{k} \,.$$ 
\end{theorem}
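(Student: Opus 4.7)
The plan is to induct on $l$. The base case $l = 1$ is immediate: $S_1 = S$ and $b_1 = k$ reduce the density hypothesis to $\dens{G}(C_1) \le (1+\delta) w(S)/k$, which is the conclusion.

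For $l \ge 2$, I apply the theorem recursively to the shifted sub-instance $(G', S', k', l') := (G_2, S_2, b_2, l - 1)$ with sequences $(C'_i, a'_i) := (C_{i+1}, a_{i+1})$. All hypotheses transfer routinely: $S'_i = S_{i+1}$ is a $b'_i = b_{i+1}$-way split in $G'_i = G_{i+1}$; the density bound is preserved because $\dens{G'}(\cdot) = \dens{G}(\cdot)$; and sparsity of $C'_i$ for $i < l'$ is inherited from the sparsity of $C_{i+1}$ for $i + 1 < l$. The count condition $\sum_{i=1}^{l-1} a'_i = A_l - a_1 < b_2$ follows from $A_l < k$ together with the identity $b_2 = k - \rho_1 \ge k - a_1$ that I establish below. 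The inductive hypothesis then yields
\[
w\!\left(\bigcup_{i=2}^{l} C_i\right) \;\le\; (1+\delta)(A_l - a_1)\,\frac{w(S_2)}{b_2}.
\]

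Now set $y_1 := w(C_1 \cap S)$ and let $\rho_1$ denote the number of extra components that removing $C_1 \cap S$ creates in $G$, so that $C_1 \cap S$ is a $(\rho_1 + 1)$-way split in $G$; since $C_1 \cap S \subseteq C_1$, we have $\rho_1 \le a_1$. A direct count gives $w(S_2) = w(S) - y_1$ and $b_2 = k - \rho_1$. Plugging the inductive bound into the target inequality $\dens{G}(T_l) \le (1+\delta) w(S)/k$ reduces everything to verifying
\[
w(C_1) + (1+\delta)(A_l - a_1)\,\frac{w(S) - y_1}{k - \rho_1} \;\le\; (1+\delta)\,\frac{A_l\, w(S)}{k}.
\]

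I verify this by a case split on whether $y_1 \ge \rho_1 w(S)/k$. If so, the fraction on the left is at most $w(S)/k$, and combining with the hypothesis bound $w(C_1) \le (1+\delta) a_1 w(S)/k$ closes the inequality. Otherwise, I invoke the sparsity of $C_1$: applied to the sub-split $C_1 \cap S$, whose separation degree $\rho_1 + 1$ is at most $a_1 + 1$, sparsity yields $\dens{G}(C_1) \le y_1/\rho_1$, hence $w(C_1) \le a_1 y_1/\rho_1$. The resulting expression is linear in $y_1$, and a routine evaluation at the two endpoints of $[0,\, \rho_1 w(S)/k]$ shows it is nonpositive at both: at $y_1 = 0$ the difference LHS $-$ RHS equals $(1+\delta) w(S)\,(A_l \rho_1 - k a_1)/[k(k - \rho_1)] \le 0$ (using $\rho_1 \le a_1$ and $A_l < k$), and at $y_1 = \rho_1 w(S)/k$ it simplifies to $-\delta a_1 w(S)/k \le 0$. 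By linearity the bound holds throughout the interval, and the degenerate case $\rho_1 = 0$ forces $y_1 \ge 0 = \rho_1 w(S)/k$, placing us in the first case. The main obstacle is exactly this last case analysis: naive telescoping only succeeds when $w(S_2)/b_2 \le w(S)/k$, and the sparsity of $C_1$ is precisely the leverage needed in the opposite regime.
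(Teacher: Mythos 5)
Your proof is correct and takes essentially the same route as the paper: induction on $l$ peeling off $C_1$, then sparsity of $C_1$ applied to the sub-split $C_1 \cap S$, followed by a linearity-plus-endpoints argument. The paper parametrizes by $x = w(C_1)$ and splits cases on $p=0$ versus $p\ge 1$; you parametrize by $y_1 = w(C_1\cap S)$ and split on $y_1 \ge \rho_1 w(S)/k$ versus not, which is an equivalent viewpoint (and your $\rho_1$ is the paper's $p$). One small caveat: your ``identity'' $b_2 = k - \rho_1$ is in general only the inequality $b_2 \ge k - \rho_1$, which is what the paper claims ($b_2 \ge k - p$) and which follows from supermodularity of the component-count function; since a larger $b_2$ only strengthens your target inequality, the argument is unaffected.
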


\begin{proof}
We prove this theorem by the induction on $l$. When $l=1$, the density of $C_1$ is at most $(1+\delta)w(S_1)/k$. Since $S_1=S$, the theorem holds. For the induction step suppose that $l \ge 2$, and the theorem holds for any $l-1$ splits. By induction hypothesis, for the last $l-1$ splits we have
$$\dens{G_2} ({\bigcup_{i=2}^{l} C_i}) = \dens{G} ({\bigcup_{i=2}^{l} C_i}) \le \dfrac{(1+\delta) w(S_2)}{b_2} \,,$$
since $S_2$ is a $b_2$-way split in $G_2$. It implies that
$$w ({\bigcup_{i=2}^{l} C_i}) \le \dfrac{(1+\delta) w(S_2)}{b_2} \sum_{i=2}^{l} a_i \,.$$
Let $S'= S \cap C_1$, and $S'' = C_1 \setminus S'$. Then, $S'$ is a $(p+1)$-way split in $G_2$ for some $0\le p \le a_1$. Since $C_1$ is a minimum $(a_1+1)$-way split, $S''$ is a $(a_1-p+1)$-way split in $G$. Also, for the split $S_2$, we have $S_2 = S \setminus S'$. It follows that $S_2$ is a $b_2$-way split in $G_2$ where $b_2 \ge k-p$.  We prove the induction by considering two cases on $p$.
\begin{itemize} 
\item If $p = 0$, then $b_2 \ge k$ and the separation degree of $S_2$ is at least $k$. Therefore,
\begin{align*}
w (&{\bigcup_{i=1}^{l} C_i}) \le w(C_1) + \dfrac{(1+\delta) w(S_2)}{b_2} \sum_{i=2}^{l} a_i \\
& \le \dfrac{(1+\delta) w(S)}{k} a_1 + \dfrac{(1+\delta) w(S_2)}{k} \sum_{i=2}^{l} a_i \\
& \le \dfrac{(1+\delta) w(S)}{k} \sum_{i=1}^{l} a_i \,.
\end{align*}
It implies that
$$
\dens{G} ({\bigcup_{i=1}^{l} C_i}) = \dfrac{w({\bigcup_{i=1}^{l} C_i})}{\sum_{i=1}^{l} a_i} \le \dfrac{(1+\delta) w(S)}{k} \,.
$$
This completes the induction step for this case.
\item Otherwise, $p \ge 1$, i.e., the separation degree of $S'= S \cap C_1$ is at least $2$. By sparsity of $C_1$, we have
$$
\dens{G}(C_1) \le \dens{G}(S') \Rightarrow \dfrac{w(C_1)}{a_1} \le \dfrac{w(S')}{p} \,.
$$
Therefore,
\begin{equation}
\label{ieq:wsp}
w(S') \ge \dfrac{w(C_1) \cdot p}{a_1} \,.
\end{equation}
It follows that the weight of the union of $C_1, C_2, \ldots, C_l$ is
$$
w ({\bigcup_{i=1}^{l} C_i}) \le w(C_1) + \dfrac{(1+\delta) w(S_2)}{b_2} \sum_{i=2}^{l} a_i \,.
$$
Since $S_2 = S - S'$, we have $w(S_2) = w(S) - w(S')$. Therefore,
\begin{align*}
w (&{\bigcup_{i=1}^{l} C_i}) \le w(C_1) + \dfrac{(1+\delta) (w(S)-w(S'))}{b_2} \sum_{i=2}^{l} a_i \,.
\end{align*}
By (\ref{ieq:wsp}), we have
\begin{align*}
w (&{\bigcup_{i=1}^{l} C_i}) \le w(C_1) + \dfrac{(1+\delta) (w(S)-w(C_1) \cdot p/a_1)}{b_2} \sum_{i=2}^{l} a_i \,.
\end{align*}
Let $a = \sum_{i=1}^{l} a_i$. We claim that the weight of the split $\bigcup_{i=1}^{l} C_i$ is at most $a (1+\delta) w(S)/k$. Define the function $g$ as
\begin{align*}
 g(x) =  x + \dfrac{(1+\delta) (w(S)-x \cdot p/a_1)}{b_2} \sum_{i=2}^{l} a_i \,,
 \end{align*}
 which is equal to
 \begin{align*}
 g(x) =  x + \dfrac{(1+\delta) (w(S)-x \cdot p/a_1)(a-a_1)}{b_2}\,.
 \end{align*}
Then,
$$
w ({\bigcup_{i=1}^{l} C_i}) \le g(w(C_1)) \,.
$$
Since $g$ is linear in $x$, it is sufficient to show that our claim holds for both ends of $g$. Note that $0 \le w(C_1) \le a_1 (1+\delta) w(S) /k$.

\begin{itemize} [label={$\diamond$}]
\item For $g(0)$ we have
$$
g(0) = \dfrac{(1+\delta) w(S)(a-a_1)}{b_2}  \,.
$$
Since $b_2 \ge k-p \ge k-a_1$, we have
$$
g(0) \le \dfrac{(1+\delta) w(S)(a-a_1)}{k-a_1}  \,.
$$
It is easy to verify that $(a-a_1)/(k-a_1) \le a/k$ for every $a \le k$. Therefore,
$$
g(0) \le \dfrac{(1+\delta) w(S) a}{k} \,,
$$
which proves our claim.
\item For $g(a_1 (1+\delta) w(S) /k)$ we have
\begin{align*}
g(a_1 &(1+\delta) w(S) /k) \\&= \dfrac{a_1 (1+\delta) w(S)}{k} + \dfrac{(1+\delta) (w(S)-(1+\delta)w(S) \cdot p/k)(a-a_1)}{b_2} \\
& = (1+\delta)w(S)(\dfrac{a_1}{k} + \dfrac{(1-(1+\delta)\cdot p/k)(a-a_1)}{b_2}) \,.
\end{align*}
Note that $1+ \delta \ge 1$, therefore,
\begin{align*}
g(a_1 (1+\delta) w(S) /k) & \le (1+\delta)w(S)(\dfrac{a_1}{k} + \dfrac{(1-p/k)(a-a_1)}{b_2}) \\
& =  (1+\delta)w(S)(\dfrac{a_1}{k} + \dfrac{((k-p)/k)(a-a_1)}{b_2}) \,.
\end{align*}
Also $b_2 \ge k-p$. Therefore,
\begin{align*}
g(a_1 (1+\delta) w(S) /k) & \le (1+\delta)w(S)(\dfrac{a_1}{k} + \dfrac{((k-p)/k)(a-a_1)}{k-p}) \\
&\le (1+\delta)w(S)(\dfrac{a_1}{k} + \dfrac{a-a_1}{k}) \\&= \dfrac{(1+\delta)w(S)a}{k} \,.
\end{align*}
\end{itemize}
Thus, $w({\bigcup_{i=1}^{l} C_i}) \le a(1+\delta)w(S)/k$. It follows that $\dens{G}({\bigcup_{i=1}^{l} C_i}) \le (1+\delta)w(S)/k$.
\end{itemize}
We proved the induction step for both cases, and it completes the proof for our theorem.
\end{proof}

Finally we can establish the approximation guarantee of the greedy algorithm.
\begin{theorem}\label{thm:greedy-minor-free}
The approximation ratio of Algorithm~\ref{alg:greedy} is $2-\epsilon/3$ in minor-free graphs.
\end{theorem}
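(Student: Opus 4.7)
The plan is to reduce the theorem to Theorem~\ref{thm:union} applied to the entire sequence of splits produced by Algorithm~\ref{alg:greedy}. For the trivial regime $k \le 4$ the while loop never executes, so the algorithm returns a minimum $k$-way split, which matches $\OPT$. Hereafter assume $k \ge 5$. Let $C_1, \ldots, C_l$ be the phase-1 splits with separation degrees $s_i = a_i + 1 \in \{2, 3\}$, let $C_{l+1}$ be the phase-2 split, fix an optimal $k$-way split $S^*$, and set $S_i = S^* \setminus \bigcup_{j<i} C_j$, which is a $b_i$-way split in $G_i = G - \bigcup_{j<i} C_j$.

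First I would verify the hypotheses of Theorem~\ref{thm:union}. Each phase-1 $C_i$ is sparse in $G_i$, since the algorithm picks it as the minimum-density split among all splits of separation degree at most $3$, which includes every $h$-way split with $h \le s_i$; sparsity in turn upgrades $C_i$ to a minimum $(a_i+1)$-way split in $G_i$. Since removing further edges only increases the component count, $b_i \ge k - \sum_{j<i} a_j$. During phase~1 the loop guard forces $\sum_{j<i} a_j \le k - 5$, hence $b_i \ge 5$; at exit, $\sum_{j \le l} a_j \in \{k-4, k-3\}$, giving $b_{l+1} \ge 4$ when $C_{l+1}$ is a minimum $4$-way split and $b_{l+1} \ge 3$ when it is a minimum $3$-way split. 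Applying Theorem~\ref{thm:singlecut} in each $G_i$ yields $\dens{G}(C_i) \le (2-\epsilon)\, w(S_i)/b_i$ for every phase-1 split, and Claim~\ref{claim:lastsplit} applied in $G_{l+1}$ yields $\dens{G}(C_{l+1}) \le (2 - \epsilon/3)\, w(S_{l+1})/b_{l+1}$ in both phase-2 sub-cases.

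All density bounds thus unify as $\dens{G}(C_i) \le (2 - \epsilon/3)\, w(S_i)/b_i$, so Theorem~\ref{thm:union} with $\delta = 1 - \epsilon/3$ gives $\dens{G}(\bigcup_i C_i) \le (2 - \epsilon/3)\, w(S^*)/k$. Because $\sum_i a_i = k - 1$, the union is a $k$-way split, and hence $w(\bigcup_i C_i) = (k-1)\,\dens{G}(\bigcup_i C_i) \le (k-1)(2 - \epsilon/3)\,\OPT/k < (2 - \epsilon/3)\,\OPT$, which establishes the claimed ratio. The main subtlety, and the one obstacle worth flagging, is that Theorem~\ref{thm:union} insists on a single common $\delta$: because the phase-2 $4$-way bound is the coarsest, I must inflate $\delta$ to $1 - \epsilon/3$ and then rely on the fact that the tighter phase-1 bound of $(2-\epsilon)$ remains valid under this larger $\delta$, and that the intermediate $b_i$'s are large enough (at least $3$) at every iteration for Theorem~\ref{thm:singlecut} and Claim~\ref{claim:lastsplit} to fire.
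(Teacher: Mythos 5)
Your proposal is correct and follows essentially the same route as the paper's own proof: apply Theorem~\ref{thm:singlecut} to the phase-one splits and Claim~\ref{claim:lastsplit} to the phase-two split, then combine everything via Theorem~\ref{thm:union} with a single uniform $\delta = 1-\epsilon/3$. You fill in a few details the paper states tersely (the lower bounds on the $b_i$'s, the observation that sparsity forces each $C_i$ to be a minimum $(a_i+1)$-way split, and the final $(k-1)/k$ slack), but the argument is the same.
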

\begin{proof}
If $k \le 4$, the algorithm finds the minimum $k$-way split at its only step. Therefore, the weight of the split returned by the algorithm is the optimal solution. 

Otherwise, we suppose that $k >4$. 
Let $S_{\opt}$ be a minimum $k$-way split. The algorithm successively finds a split with the separation degree of at most $3$ that has a minimum density. The only exception is the last split that it picks which is either a minimum $3$ or a minimum $4$-way split.  

Let $C_1,C_2, \ldots, C_{l}$ be the splits picked by the algorithm, $G_i = G - \bigcup_{j=1}^{i-1} C_j$, and $S_i = S_{\opt} \setminus \bigcup_{j=1}^{i-1} C_j$ be a $b_i$-way split in $G_i$. By Theorem \ref{thm:singlecut}, $\dens{G}(C_i) \le (2-\epsilon)w(S_i)/b_i$ for every $i<l$. Also by Claim \ref{claim:lastsplit}, $\dens{G}(C_l) \le (2-\epsilon/3)w(S_l)/b_l$. Also, all the splits $C_1,C_2, \dots, C_{l-1}$ are sparse. Let $C = \bigcup_{i=1}^l C_i$ be the $k$-cut returned by the algorithm. It follows from Theorem \ref{thm:union} that
$$\dens{G}(C) \le (2-\epsilon/3)w(S_{\opt})/k \,.$$
Therefore,
\begin{align*}
w(C) \le (2-\epsilon/3)w(S_{\opt}) \,.
\qedhere
\end{align*}
\end{proof}
\begin{corollary}
The approximation ratio of Algorithm~\ref{alg:greedy} is $1.9968...$ in planar graphs.
\end{corollary}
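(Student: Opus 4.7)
The plan is to specialize Theorem~\ref{thm:greedy-minor-free} to the planar setting by tightly bounding $\beta(G')$ in the proof of Theorem~\ref{thm:singlecut}. Recall that the minor-free analysis establishes an approximation ratio of $2 - \epsilon/3$ with $\epsilon = 1/(35\beta(G'))$, so all that remains is to show that $\beta(G') \le 3$ whenever the input is planar, and then substitute the number.

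First I would verify that planarity is preserved through every graph transformation used in the analysis of Algorithm~\ref{alg:greedy}. Each intermediate graph $G_i$ arises from deleting edges of the planar input, hence is planar. Inside Theorem~\ref{thm:singlecut}, $G'$ is built from $G_i$ by (i) possibly adding an apex vertex joined to one vertex of each connected component in the disconnected case (which can be drawn in the outer face of a planar embedding and so keeps the graph planar), (ii) contracting every component of $G_i - S_k$ to a single vertex, and (iii) collapsing parallel edges into a single weighted edge. All three operations preserve planarity, so the resulting $G'$ is simple, connected, and planar.

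Next, I would invoke Euler's formula: a simple planar graph on $k \ge 3$ vertices has at most $3k - 6$ edges, giving $\beta(G') \le 3 - 6/k < 3$. Substituting the bound $\beta(G') \le 3$ into the definition of $\epsilon$ from Theorem~\ref{thm:singlecut} yields $\epsilon \ge 1/(35 \cdot 3) = 1/105$. Plugging this into Theorem~\ref{thm:greedy-minor-free} then produces the approximation ratio
$$ 2 - \frac{\epsilon}{3} \le 2 - \frac{1}{3 \cdot 105} = 2 - \frac{1}{315} = \frac{629}{315} = 1.9968253\ldots, $$
which matches the claimed $1.9968\ldots$. I do not anticipate any real obstacle: the argument is a direct computation, and the only subtlety worth spelling out is the planarity-preservation check above, in particular for step (i) in the disconnected case, which is a routine planar embedding argument.
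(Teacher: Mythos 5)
Your proposal is correct and follows essentially the same route as the paper: bound $\beta(G') \le 3$ via Euler's formula (for the contracted planar minor), obtain $\epsilon \ge 1/105$ from Theorem~\ref{thm:singlecut}, and substitute into the $2 - \epsilon/3$ guarantee of Theorem~\ref{thm:greedy-minor-free} to get $2 - 1/315 = 1.9968\ldots$. You are somewhat more careful than the paper in spelling out why planarity is preserved through the apex-vertex construction in the disconnected case (the paper's corollary proof glosses over this by asserting ``$G'$ is a minor of $G$,'' which is not literally true once the apex vertex is added, though the conclusion $\beta(G') \le 3$ still holds); this is a worthwhile detail to make explicit but does not constitute a different approach.
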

\begin{proof}
The $\epsilon$ derived from Theorem \ref{thm:singlecut} is $1/(35 \beta(G'))$ where $G'$ is a minor of $G$. If $G$ is a planar graph, then $G'$ is also planar. Therefore, $\beta(G')\le 3$, and Theorem \ref{thm:singlecut} holds for $\epsilon= 1/(35 \cdot 3) = 1/105$. Hence, the approximation ratio of Algorithm \ref{alg:greedy} in planar graphs is $2-\epsilon/3= 2- 1/315$ which is $1.9968... \,.$
\end{proof}

Putting together Theorem~\ref{thm:greedy-minor-free} with the bounds
established for $\epsilon$ in this section yields
Theorem~\ref{thm:minor-free}: there exists a polynomial-time algorithm
for $k$-cut whose approximation factor for minor-free graphs is a
constant factor smaller than $2$. The approximation guarantee is $1.9968...$ in planar graphs.

\section{Polynomial Time Approximation Scheme}\label{sec:ptas}
\begin{algorithm} [t!]
 \KwData{An $H$-minor-free connected graph $G$, integer $k$, and $\epsilon>0$}
 \begin{algorithmic} [1]
 \STATE  $C = \emptyset$. 
 \WHILE {separation degree of $C$ is at most $k-h(\epsilon)(2+1/\epsilon)$}
 	
 	\STATE Let $G'= G - C$ be the graph obtained
        by removing all the previous cuts from $G$. \COMMENT{Note that the separation degree of $C$ is equal to the number of connected components in $G'$.}
 	\STATE Let $C'$ be a split in $G'$ whose density is minimum among all splits with the separation degree of at most $h(\epsilon)$.
 	\STATE $C= C \cup C'$.
 \ENDWHILE
  \STATE Let $G'= G - C$ be the graph obtained
        by removing all the previous cuts from $G$.
 \STATE Let $d$ be the separation degree of $C$. 
 \STATE Let $C'$ be a minimum $(k-d+1)$-way split in $G'$. 
 \RETURN $C \cup C'$.
 \end{algorithmic}
\caption{PTAS for the $k$-cut Problem in Minor-free Graphs}
\label{alg:ptas}
\end{algorithm}

In this section we generalize our method to derive a polynomial time approximation scheme (PTAS) for the $k$-cut problem in minor-free graphs. Recall that in the last section we showed that approximation ratio of a natural greedy algorithm which successively removes the lowest density split with the separation degree of at most $3$ is less than $2$. 
Our main observation for proving this bound was to show that there exists a split with the separation degree of at most $3$ such that its density is at most $(2-\epsilon)/k$ fraction of the weight of a minimum $k$-way cut.

We generalize our method, and provide a PTAS for the $k$-cut problem in minor-free graphs. In this section we show that the density of minimum weighted splits converges to $1/k$ fraction of the weight of a minimum $k$-way cut if we consider splits with larger separation degrees. For an $\epsilon>0$, we first show that there exists a constant $h(\epsilon)$ such that there exists a split with the separation degree of at most $h(\epsilon)$ and the density of at most $(1+\epsilon) \opt/k$ where $\opt$ is the weight of the optimal solution. To this purpose, we use the separation theorem which shows that in every minor-free graph with $n$ vertices, the removal of $O(\sqrt{n})$ vertices, can partition the graph into two parts such that each of them has at most $2n/3$ vertices.

\begin{theorem} [\cite{alon1990separator}, \cite{lipton1979separator}]
\label{thm:separator}
Let $G$ be an $H$-minor-free graph with $n$ vertices, then there exists a separator of size of at most $c_1 \sqrt{n}$ such that $c_1$ is a constant only depending on $|V(H)|$, and removal of this separator partitions the graphs into two parts each of which has at most $2n/3$ vertices. 
\end{theorem}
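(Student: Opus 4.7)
The plan is to prove this classical separator theorem in two stages, handling the planar case first via the Lipton--Tarjan BFS argument and then extending to $H$-minor-free graphs via the Alon--Seymour--Thomas approach.

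For the planar case, I would pick an arbitrary root $v$ and build a BFS tree $T$ rooted at $v$. Let $L_i$ denote the set of vertices at depth $i$ in $T$ and $\ell_i = |L_i|$. I would define $m$ to be the (unique) level index such that the union of levels $0,\ldots,m-1$ contains at most $n/2$ vertices while $0,\ldots,m$ contains more than $n/2$; this ensures that neither side of $L_m$ has more than $2n/3$ vertices. If $\ell_m \le c_1 \sqrt{n}$ then $L_m$ itself is the desired separator. Otherwise I would locate two levels $L_{m_1}$ and $L_{m_2}$ with $m_1 < m < m_2$, each of size at most $\sqrt{n}$ and with $m_2 - m_1 \le \sqrt{n}$ (such levels must exist, or the total vertex count would exceed $n$). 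Contracting everything above level $m_1$ and below level $m_2$ to single super-vertices produces a planar graph whose non-separator middle has diameter $O(\sqrt{n})$, and triangulating gives a spanning tree in which some fundamental cycle of a non-tree edge has length $O(\sqrt{n})$. The union of $L_{m_1}$, $L_{m_2}$ and this cycle yields a balanced separator of size $O(\sqrt{n})$.

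For the minor-free generalization, the plan is to mimic the BFS-layer argument but replace the planar cycle-separator tool with a minor-forcing argument. If there were a long run of ``thick'' consecutive BFS layers (each of size $\gg \sqrt{n}$), one could carefully select pairwise-disjoint connected subgraphs, one per layer, together with paths across the tree edges between them, and thereby exhibit a $K_{|V(H)|}$ minor, contradicting $H$-minor-freeness. This forces the existence of a short stretch of consecutive layers whose total size is $O(|V(H)|\sqrt{n})$; combined with the near-linear edge bound from Lemma~\ref{lem:maxedge}, one recovers the separator of size $c_1 \sqrt{n}$ where $c_1$ depends only on $|V(H)|$. The balance condition follows exactly as in the planar case by choosing the middle of this stretch.

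The main obstacle is the minor-free extension, where planarity's cycle-separator tool is unavailable. The technical heart is the structural lemma that an $H$-minor-free graph cannot have too many consecutive thick BFS layers, which requires an explicit construction of an $H$-minor from any would-be counterexample. Once that is established, balancing and size bounds fall out by averaging over layer indices, and the constant $c_1$ can be traced back through Lemma~\ref{lem:maxedge} to the extremal edge bound of Thomason.
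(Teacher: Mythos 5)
This theorem is cited in the paper from \cite{lipton1979separator} and \cite{alon1990separator} and is not re-proved there, so there is no in-paper argument to compare against; what follows is an assessment of your sketch on its own terms.

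Your planar half is a correct outline of the Lipton--Tarjan BFS argument: pick the median level, fall back to bracketing thin levels $L_{m_1}, L_{m_2}$ with $m_2-m_1\le\sqrt n$ if the median level is fat, contract the outside, triangulate, and use a fundamental-cycle separator of length $O(m_2-m_1)$. That is the standard route and it works.

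The $H$-minor-free half has a genuine gap. The load-bearing claim is that a long run of consecutive ``thick'' BFS layers (each of size $\gg\sqrt n$) would let you build a $K_{|V(H)|}$ minor, so that excluding $H$ forces a short stretch of thin layers. That implication is false. Take the planar ``cylinder'' $P_{\sqrt n}\,\square\,C_{\sqrt n}$: a BFS from one end produces $\Theta(\sqrt n)$ consecutive layers each of size $\Theta(\sqrt n)$, yet the graph is planar and thus $K_5$-minor-free. Thick consecutive layers by themselves only give you a sequence of large vertex sets with edges between neighbours; to manufacture a clique minor you would additionally need many vertex-disjoint paths crossing between non-adjacent layers, and nothing in the BFS structure supplies those. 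This is why the Alon--Seymour--Thomas proof does not proceed by BFS layering at all: it instead iteratively grows a collection of disjoint connected vertex sets, maintaining the invariant that either the construction terminates with a small balanced separator or the collection can be extended to exhibit a $K_{|V(H)|}$ minor. (A by-now more common alternative route is via tree decompositions: $K_h$-minor-free graphs have treewidth $O(h\sqrt n)$, and a balanced separator of comparable size then follows from standard tree-decomposition arguments.) As written, your contradiction step does not go through, and repairing it requires importing one of these substantially different structural arguments rather than patching the BFS approach.

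One smaller point for the planar case: choosing the median level $m$ so that levels $0,\dots,m-1$ hold at most $n/2$ vertices only guarantees that the two \emph{sides} outside $L_m$ are each below $2n/3$; when $L_m$ is fat and you pass to the bracketing levels and the cycle separator, you still need the standard argument that the fundamental cycle splits the middle region in a balanced way. Your sketch gestures at this but does not say it; it is not a gap in the idea, just a step that needs to be stated.
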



The following theorem, is our main observation to derive a PTAS for the $k$-cut problem.

\begin{theorem}
\label{thm:low-density}
Given a minor $H$, and an $\epsilon>0$, there exists a constant $h(\epsilon)$ such that for any $H$-minor-free graph $G=(V,E;w)$ and any $k\ge h(\epsilon)$ and $k$-way split $S_k$ in $G$, there exists a split with the separation degree of at most $h(\epsilon)$ and the density of at most $(1+\epsilon) w(S_k)/{k} \;.
$
\end{theorem}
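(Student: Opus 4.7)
The plan is to generalize the proof of Theorem~\ref{thm:singlecut}, replacing its matching construction with a recursive application of the separator theorem. As in Theorem~\ref{thm:singlecut}, I first reduce to the case $G$ connected (adjoining an auxiliary vertex joined by $\infty$-weight edges if $G$ is disconnected) and contract each component of $G - S_k$ to a single vertex, producing an $H$-minor-free graph $G'$ on exactly $k$ vertices. Normalize so that $w(E(G')) = w(S_k) = 1$; since every split in $G'$ lifts to a split in $G$ of identical separation degree and weight, it suffices to produce a split in $G'$ of separation degree at most $h(\epsilon)$ and density at most $(1+\epsilon)/k$.

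Set $r := h(\epsilon) - O(1)$ to be a large constant (eventually $r = \Theta(1/\epsilon^{2})$, with constants depending on $|V(H)|$). Apply Theorem~\ref{thm:separator} recursively to $G'$: in any current subgraph of more than $r$ vertices, find a balanced vertex separator of size $O(\sqrt{n})$ and split accordingly. This yields connected pieces $V_1, \dots, V_t$ with $|V_i| \le r$ for all $i$ and $\sum_i |V_i| = k$, with a total of $O(k/\sqrt{r})$ separator vertices across all levels (by a geometric sum of separator sizes along the recursion tree). For each piece $V_i$ of size $a_i$, let $S_i$ be the set of all edges of $G'$ incident to $V_i$; removing $S_i$ isolates every vertex of $V_i$ and leaves $G' - V_i$ as $c_i \ge 1$ components, making $S_i$ a split of separation degree $a_i + c_i$. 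By a small modification of the decomposition (merging tiny boundary fragments into their parent pieces), we may assume $c_i = O(1)$, keeping the separation degree of $S_i$ within $r + O(1) \le h(\epsilon)$. Since $\dens{G'}(S_i) = w(S_i)/(a_i + c_i - 1) \le w(S_i)/a_i$, it suffices to exhibit some $i$ with $w(S_i)/a_i \le (1+\epsilon)/k$.

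A clean averaging step then closes the argument. Each inter-piece edge lies in exactly two of the $S_i$ and each intra-piece edge in exactly one, so $\sum_i w(S_i) = 2w_{cross} + w_{int} = 1 + w_{cross}$, and combined with $\sum_i a_i = k$ the $a_i$-weighted mean of $w(S_i)/a_i$ equals $(1+w_{cross})/k$. Hence, provided the decomposition can be chosen with $w_{cross} \le \epsilon$, some $V_i$ immediately yields a split of density at most $(1+\epsilon)/k$.

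The main obstacle is this bound on $w_{cross}$: Theorem~\ref{thm:separator} controls only the number of separator vertices, not the weights of their incident edges, so a single ``heavy'' vertex on a separator could in principle carry a constant share of the total weight. I plan to address this using a weighted variant of the separator theorem (which for minor-free graphs balances an arbitrary nonnegative vertex weighting with a separator of $O(\sqrt{n})$ vertices), applied with vertex weights equal to the weighted degrees $c_v$; combined with a preprocessing pass that individually handles any vertex of unusually large $c_v$ (by isolating it as its own cluster and analyzing that case directly), this controls the total weight incident to separator vertices to $O(1/\sqrt{r})$, so $w_{cross} = O(1/\sqrt{r}) \le \epsilon$ for $r$ chosen sufficiently large as a function of $\epsilon$ and $|V(H)|$.
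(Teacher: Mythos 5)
Your high-level plan — contract to a $k$-vertex minor-free graph $G'$, apply the separator theorem recursively to break $G'$ into constant-size pieces, and use an averaging argument over the per-piece splits — is exactly the paper's strategy. However, you miss the one trick that makes the argument go through cleanly, and as a result you create for yourself an obstacle (controlling $w_{cross}$) that the paper never has to face.

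The paper does \emph{not} assign the separator vertices to any piece. After recursive separation (their Lemma~\ref{lem:eps-sep}) it keeps only the $O(k\delta)$ separator vertices on the side, and lets $P_1,\dots,P_l$ be the connected components of $G'$ minus the separator, each of size at most $1/\delta^2$. Because the separator actually disconnects the $P_i$'s, there are no edges between $P_i$ and $P_j$ for $i\neq j$. Hence every edge of $G'$ that appears in some $C_i$ (the split isolating all vertices of $P_i$) appears in exactly one $C_i$: either it is internal to $P_i$, or it goes from $P_i$ to a separator vertex. This gives $\sum_i w'(C_i)\le 1$ with \emph{no} $w_{cross}$ correction term. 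The only ``loss'' is that $\sum_i|P_i| = k - |\mathrm{sep}|\ge k(1-c_2\delta)$ rather than $k$, and that factor is absorbed by picking $\delta = \epsilon/(c_2(1+\epsilon))$: assuming for contradiction that every $C_i$ has weight exceeding $|P_i|(1+\epsilon)/k$ forces $\sum_i w'(C_i) > (1+\epsilon)(1-c_2\delta) = 1$, a contradiction. Crucially, this works even if a single separator vertex carries a constant fraction of the total weight — its incident edges are still counted only once.

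By contrast, your version partitions \emph{all} $k$ vertices into pieces $V_i$, which re-introduces cross-piece edges and the $1 + w_{cross}$ double-counting in $\sum_i w(S_i)$. You then correctly identify that you must bound $w_{cross}$, and propose a weighted separator theorem plus a preprocessing pass for heavy vertices. This is a real gap: the standard separator theorem the paper cites only bounds the \emph{number} of separator vertices, and patching this with a weighted version is a genuinely stronger tool than what is needed. Additionally, your claim that the separation degree of $S_i$ is $a_i + c_i$ with ``$c_i = O(1)$ after merging tiny boundary fragments'' is not justified — removing a size-$r$ piece can in principle disconnect the remainder into many components — whereas the paper sidesteps this entirely by observing that $C_i$ has separation degree \emph{at least} $|P_i|+1$ and therefore a minimum $(|P_i|+1)$-way split of weight at most $w(C_i)$ exists, with separation degree bounded by $1/\delta^2+1 = h(\epsilon)$. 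The upshot: your averaging calculation and the structure of the argument are on target, but you should not put separator vertices into pieces; leaving them out removes both the $w_{cross}$ issue and the $c_i$ issue at once.
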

\begin{proof}
As we discussed in the proof of Theorem \ref{thm:singlecut}, we can assume w.l.o.g. that $G$ is connected. We contract all the edges which are not in $S_k$ to get a new minor-free graph $G'=(V',E';w')$. The total weight of the edges in $G'$ is equal to the weight of $S_k$. W.l.o.g., we can assume that the graph $G'$ is normalized, i.e., $w(S_k) = 1$. 

The following lemma is directly derived from Theorem \ref{thm:separator} which shows that for any $\delta>0$, there exists $O(k \delta)$ vertices such that removal of them partitions $G'$ into parts with the size at most $1/\delta^2$.

\begin{lemma}
\label{lem:eps-sep}
For any $H$-minor-free graph $G$ with $n$ vertices, there exists a constant $c_2$ such that for any $\delta>0$, there are $c_2 n \delta$ vertices such that removing them partitions the graph into parts with the size at most $1/\delta^2$. 
\end{lemma}
\begin{proof}
The proof is almost alike to the proof of the similar lemma in \cite{federickson1987fast}. We recursively find and remove the separator of Theorem \ref{thm:separator} in each part until its size becomes at most $1/\delta^2$. Let $b(n)$ be the number of vertices removed in an $H$-minor-free graph with $n$ vertices. The removal of the separator in Theorem \ref{thm:separator} partitions the graph into two parts such that each of them has at least $n/3$ vertices. Let $n \alpha$ be the size of the first part, then the size of the other part is at most $n(1-\alpha)$. Therefore, we have
$$
b(n) \le c_1 \sqrt{n} + b(n \alpha) + b(n(1-\alpha)) \,,
$$ 
where $1/3 \le \alpha \le 2/3$. Also, we have
$$
b(n) =0 \,,
$$
for any $n \le 1/\delta^2$.
It can be shown by induction that
$$b(n) \le c_2 n \delta - d \sqrt{n} \,,$$
for some constants $c_2$ and $d$.
\end{proof}

Note that $G'$ is $H$-minor-free. Therefore, there exists a constant $c_2$ such that Lemma \ref{lem:eps-sep} holds for $G'$. Let $\delta=\epsilon/(c_2(1+\epsilon))$, by Lemma \ref{lem:eps-sep}, there is a separator of size at most $c_2 k \delta$ such that removing it partitions $G'$ into several parts, each with the size of at most $1/\delta^2$. 

Let $P_1, P_2, \cdots, P_l$ be these parts where $P_i$ is the set of vertices in the part $i$. Let 
$P_i=\{v_{i,1}, v_{i,2}, \cdots \}$, and $C_i$ be the split that separates each vertex in $P_i$ from every other vertex in $G'$, i.e., $C_i = (\{v_{i,1}\}, \{v_{i,2}\}, \cdots , V' \setminus P_i)$. Then, the separation degree of $C_i$ is at least $|P_i|+1$. We claim that the weight of at least one of $C_i$ is at most $|P_i|(1+\epsilon)/k$. For the sake of the contradiction, suppose that the weight of every $C_i$ is greater than $|P_i|(1+\epsilon)/k$.

Note that every edge in the splits $C_i$, is either between two vertices in a same part, or between a vertex of the separator and another vertex. Therefore, each edge appears at most once in these splits. Thus,
\begin{align}
\label{ptas-ieq1}
\sum_{i=1}^{l} w'(C_i) \le 1 \,.
\end{align}
On the other hand, the weight of every $C_i$ is greater than $|P_i|(1+\epsilon)/k$. Therefore, we have
$$
\sum_{i=1}^{l} w'(C_i) > \dfrac{1+\epsilon}{k} \sum_{i=1}^{l} |P_i| \,.
$$
Since the size of the separator is at most $c_2 k \delta$, we have
\begin{align*}
\sum_{i=1}^{l} w'(C_i) & > \dfrac{1+\epsilon}{k} \sum_{i=1}^{l} |P_i| \\
& \ge \dfrac{1+\epsilon}{k} k (1-c_2 \delta) \\
& = (1+\epsilon) (1-c_2 \delta) \,.
\end{align*} 
Substituting $\delta$ with $\epsilon/(c_2(1+\epsilon))$, gives us
\begin{align}
\label{ptas-ieq2}
\sum_{i=1}^{l} w'(C_i)  > (1+\epsilon) (1-c_2 \delta) = (1+\epsilon) (1- \dfrac{\epsilon}{1+\epsilon}) = 1 \,.
\end{align}
Inequality (\ref{ptas-ieq1}) contradicts (\ref{ptas-ieq2}). Therefore, for at least one of the $C_i$, its weight is at most $|P_i|(1+\epsilon)/k$. Let $C_i$ be a split with the weight of at most $|P_i|(1+\epsilon)/k$. The separation degree of $C_i$ is at least $|P_i|+1$. Therefore, the weight of a minimum $(|P_i|+1)$-way split is at most $|P_i|(1+\epsilon)/k$, and its density is at most $(1+\epsilon)/k$. Since $|P_i|$ is at most $1/\delta^2$, the separation degree of this split is at most $1/\delta^2+1$ which is a constant. Therefore, there exists a split with a separation degree of at most $1/\delta^2+1$ and the density of at most $(1+\epsilon)/k$ which proves the theorem.
\end{proof}


Based on our observation from Theorem \ref{thm:low-density}, we modify Algorithm \ref{alg:greedy} as follows to derive a \ptas for the $k$-cut problem.
Similar to our $2-\epsilon$ algorithm, Algorithm \ref{alg:ptas} has two phases. In its first phase while the number of components in our graph is at most $k-h(\epsilon)(2+1/\epsilon)$, it picks and removes a minimum density split with a separation degree of at most $h(\epsilon)$. In its second phase, if the current graph has $d$ connected components, it finds and removes a minimum $(k-d+1)$-way split.

For a $k \le h(\epsilon)(2+1/\epsilon)$, since the number of connected components in the original graph is $1$ which is larger than $k-h(\epsilon)(2+1/\epsilon)$, the algorithm skips the first phase, and finds a minimum $k$-way split at its only step. When $k >h(\epsilon)(2+1/\epsilon)$, the algorithm repeatedly finds and removes a minimum density split with a separation degree of at most $h(\epsilon)$. Removing the edges of this split increases the number of connected components in our graph by at most $h(\epsilon)-1$ . Therefore, in this case the number of connected components at the end of the first phase of our algorithm is at most 
$$k-h(\epsilon)(2+1/\epsilon) + h(\epsilon)-1 \le k-h(\epsilon)(1+1/\epsilon)\,.
$$
It follows that the separation degree of a split picked by the second phase of the algorithm is between $h(\epsilon)(1+1/\epsilon)+1$ and $h(\epsilon)(2+1/\epsilon)$. Now, we show that the density of a minimum split with the separation degree of at least $h(\epsilon)(1+1/\epsilon)$ is also small.

\begin{lemma}
\label{lem:lastsplitptas}
Given a minor-free graph $G$ and integers $s \ge h(\epsilon)(1+1/\epsilon)$ and $k \ge s$, 
let $S$ be a minimum $s$-way split in $G$. Then,
for any $k$-way split $S_k$, we have
\begin{align*}
\dens{G}(S) \le \dfrac{(1+2\epsilon) w(S_k)}{k} \,.
\end{align*}
\end{lemma}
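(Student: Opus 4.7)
The plan is to construct a split $C$ in $G$ of separation degree slightly larger than $s$ with $\dens{G}(C) \le (1+\epsilon)\,w(S_k)/k$, then extract an $s$-way sub-split $C' \subseteq C$ and bound $\dens{G}(C')$. Since the minimum $s$-way split $S$ satisfies $\dens{G}(S) \le \dens{G}(C')$, the lemma will follow.

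To build $C$, I iterate Theorem~\ref{thm:low-density} in the style of the main loop of Algorithm~\ref{alg:ptas}. Starting from $G_0 = G$, at step $i$ let $C_i$ be a split of $G_{i-1} = G - \bigcup_{j<i} C_j$ of minimum density among splits of separation degree at most $h(\epsilon)$, and stop at the first step $t$ for which $C := \bigcup_{i \le t} C_i$ has separation degree $s' \ge s$ in $G$. Applying Theorem~\ref{thm:low-density} to $G_{i-1}$ with the residual $S^{(i-1)} = S_k \setminus \bigcup_{j<i} C_j$, whose separation degree $b^{(i-1)}$ in $G_{i-1}$ satisfies $b^{(i-1)} \ge k - \sum_{j<i}(\sigma_j - 1)$ (where $\sigma_j$ denotes the separation degree of $C_j$ in $G_{j-1}$), yields $\dens{G_{i-1}}(C_i) \le (1+\epsilon)\,w(S^{(i-1)})/b^{(i-1)}$. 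The minimum-density selection also makes $C_i$ sparse in $G_{i-1}$ and a minimum $\sigma_i$-way split there, so the hypotheses of Theorem~\ref{thm:union} are met with $\delta = \epsilon$ and $a_i = \sigma_i - 1$; it yields $\dens{G}(C) \le (1+\epsilon)\,w(S_k)/k$. The stopping rule together with $\sigma_i \le h(\epsilon)$ forces $s \le s' \le s + h(\epsilon) - 2$.

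To extract the $s$-way sub-split $C'$, consider the quotient multigraph $M$ whose vertices are the $s'$ components of $G - C$ and whose edges are the edges of $C$; $M$ is connected because $G$ is. Choose any spanning tree of $M$, let $A \subseteq C$ be any $s' - s$ of its edges, and set $C' = C \setminus A$. Adding $A$ back to $G - C$ merges exactly $s' - s$ pairs of components, so $C'$ is an $s$-way split in $G$ with $w(C') \le w(C)$. Hence
\[
\dens{G}(S) \le \dens{G}(C') = \frac{w(C')}{s-1} \le \frac{s'-1}{s-1}\,\dens{G}(C) \le \left(1 + \frac{h(\epsilon)-2}{s-1}\right)(1+\epsilon)\,\frac{w(S_k)}{k},
\]
and the hypothesis $s \ge h(\epsilon)(1+1/\epsilon)$ gives $(h(\epsilon)-2)/(s-1) \le h(\epsilon)/s \le \epsilon/(1+\epsilon)$, so the right-hand side simplifies to $(1+2\epsilon)\,w(S_k)/k$.

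The main obstacle is the regime $s > k - h(\epsilon) + 2$, where $\sum a_i$ could reach or exceed $k$ and the corresponding hypothesis of Theorem~\ref{thm:union} may fail. In that regime $s$ is within $h(\epsilon) - 2$ of $k$, so I would bypass the greedy construction and directly extract an $s$-way sub-split from $S_k$ itself via the same spanning-tree argument applied to its quotient graph, giving an $s$-way split of weight at most $w(S_k)$; the hypothesis $s \ge h(\epsilon)(1+1/\epsilon)$ is exactly what ensures the resulting density $w(S_k)/(s-1)$ is at most $(1+2\epsilon)\,w(S_k)/k$.
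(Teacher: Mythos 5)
Your proposal is correct, but it takes a genuinely different route from the paper's. Both arguments begin the same way: run a greedy that repeatedly takes a minimum-density split of separation degree at most $h(\epsilon)$ and apply Theorem~\ref{thm:low-density} plus Theorem~\ref{thm:union} to bound the density of the accumulated split. The divergence is in where you stop and how you fix up the separation degree. The paper stops the greedy \emph{before} reaching $s$, ending with an $s'$-way split for some $s'$ in the window $s-h(\epsilon)<s'\le s$, and then \emph{extends} it to an $s$-way split by iterating Lemma~\ref{lemma:largercut}, absorbing a density increment of roughly $\epsilon/(h(\epsilon)k)$ per added component. You instead run the greedy \emph{past} $s$, stopping at the first step where the accumulated split has separation degree $s'\ge s$ (so $s\le s'\le s+h(\epsilon)-2$), and then \emph{trim} it down to an $s$-way split by dropping $s'-s$ spanning-tree edges in the quotient multigraph, which only reduces weight; the density bookkeeping $\dens{G}(C')\le\frac{s'-1}{s-1}\dens{G}(C)$ then closes the argument. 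Your trimming step is more elementary than Lemma~\ref{lemma:largercut}, and the final arithmetic $(1+(h(\epsilon)-2)/(s-1))(1+\epsilon)\le 1+2\epsilon$ is clean. The price you pay for overshooting $s$ is a case split: when $s>k-h(\epsilon)+2$ the greedy's $\sum a_i$ could reach $k$ (violating a hypothesis of Theorem~\ref{thm:union}) and, less visibly, the residual separation degree $b^{(i-1)}\ge k-\sum_{j<i}(\sigma_j-1)$ could drop below $h(\epsilon)$, which would void the application of Theorem~\ref{thm:low-density}; your main case $k\ge s+h(\epsilon)-2$ is exactly what keeps $b^{(i-1)}\ge h(\epsilon)$, and you handle the remaining regime by trimming $S_k$ directly, with the hypothesis $s\ge h(\epsilon)(1+1/\epsilon)$ supplying the needed bound $k\le(1+2\epsilon)(s-1)$. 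The paper's stop-short-and-extend strategy never overshoots and therefore avoids both concerns uniformly (the residual always has separation degree $\ge k-s+h(\epsilon)+1\ge h(\epsilon)$ since $k\ge s$). Two small housekeeping remarks: you should say explicitly, as the paper does, that one may assume $G$ connected (via the standard reduction from Theorem~\ref{thm:singlecut}'s proof), since your spanning-tree argument uses connectivity of the quotient multigraph; and it is worth flagging that the $b^{(i-1)}\ge h(\epsilon)$ condition is part of what forces the case boundary, rather than leaving it implicit.
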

\begin{proof}
As we discussed in the proof of Theorem \ref{thm:singlecut}, we can assume w.l.o.g. that $G$ is connected. We contract all the edges which are not in $S_k$ to get a minor-free graph $G'=(V',E';w')$. The total weight of the edges in $G'$ is equal to the weight of $S_k$. W.l.o.g., we can assume that the graph $G'$ is normalized, i.e., $w(S_k) = w'(E')=1$.

Now we want to remove some edges in $G'$ to increase its number of connected components by $s-1$. While the number of components in $G'$ is at most $s-h(\epsilon)$, we find a split with the separation degree of at most $h(\epsilon)$ that has the minimum density. Let $C_1, C_2, \ldots, C_l$ be these splits and $(a_1+1), (a_2+1), \ldots, (a_l+1)$ be their separation degree respectively. Let $G'_i = G' - \bigcup_{j=1}^{i-1} C_j$ for every $C_i$.  It follows that $G'_i$ has $\sum_{j=1}^{i-1}a_j$ connected components and its edges is a $(k-\sum_{j=1}^{i-1}a_j)$-way split in $G'_i$. According to Theorem \ref{thm:low-density} we have
$$
\dens{G'} (C_i) \le \dfrac{(1+\epsilon)w'(G'_i)}{ k- \sum_{j=1}^{i-1}a_j} \,.
$$

Let $C= \bigcup_{i=1}^{l} C_i$, be a $s'$-way split for $G'$ where $s'=1+\sum_{i=1}^{l} a_i$. Since the number of connected components in $G'-C$ is greater than $s-h(\epsilon)$, we have $s' > s-h(\epsilon)$. Therefore, $s' > s- h(\epsilon) \ge h(\epsilon)/\epsilon$. Also, $s-s' < h(\epsilon)$. By applying Theorem \ref{thm:union} to the splits $C_1, C_2, \ldots, C_l$, we have
$$
\dens{G'}(C) \le \dfrac{(1+\epsilon)w'(G')}{k} = \dfrac{1+\epsilon}{k} \,.
$$
If $s=s'$, we have found a split with the density of at most $(1+\epsilon)/k$ and proved the theorem. Otherwise, we can assume that $s>s'$. 
By setting $\delta=\epsilon$, Lemma \ref{lemma:largercut} implies that the density of a minimum $(s'+1)$-way split is at most
$$
\dfrac{1+\epsilon}{k}+\dfrac{1-\epsilon}{s' k} < \dfrac{1+\epsilon}{k}+\dfrac{1}{s' k} \,.
$$
Since $s'$ is larger than $h(\epsilon)/\epsilon$, the density of a minimum $(s'+1)$-way split is at most
$$
\dfrac{1+\epsilon}{k}+\dfrac{1}{s' k} < \dfrac{1+\epsilon}{k}+\dfrac{\epsilon}{h(\epsilon) k} = \dfrac{1+\epsilon(1+1/h(\epsilon))}{k} \,. 
$$
Repeatedly applying Lemma \ref{lemma:largercut} implies that for any $a>0$, the density of a minimum $(s'+a)$-way split is at most
$$
\dfrac{1+\epsilon(1+a/h(\epsilon))}{k} \,.
$$
Therefore, the density of a minimum $s$-way split is at most
\begin{align*}
\dfrac{1+\epsilon(1+(s-s')/h(\epsilon))}{k} < \dfrac{1+\epsilon(1+h(\epsilon)/h(\epsilon))}{k} = \dfrac{1+2\epsilon}{k} \,.
\qedhere
\end{align*}
\end{proof}

Now we are ready to prove that Algorithm~\ref{alg:ptas} is a PTAS for the $k$-way cut in minor-free graphs.
\begin{theorem}\label{thm:main-proof}
Given an $\epsilon > 0$, the approximation ratio of Algorithm~\ref{alg:ptas} is $1+ 2 \epsilon$ in minor-free graphs.
\end{theorem}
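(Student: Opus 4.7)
The plan is to parallel the proof of Theorem~\ref{thm:greedy-minor-free} but with the sharper per-split density bounds furnished by Theorem~\ref{thm:low-density} and Lemma~\ref{lem:lastsplitptas}. When $k \le h(\epsilon)(2+1/\epsilon)$, the while-loop guard fails on its very first test (the initial separation degree of $C=\emptyset$ is $1$ in the connected graph $G$), so the algorithm skips directly to the final step and returns a minimum $k$-way split, which is exactly \opt. For the remaining case $k>h(\epsilon)(2+1/\epsilon)$, let $C_1,\ldots,C_{l-1}$ denote the splits picked inside the while loop and $C_l$ the one produced by the final step; set $G_i := G - \bigcup_{j<i} C_j$, let $a_i+1$ be the separation degree of $C_i$ in $G_i$ (so that $\sum_i a_i = k-1$), let $S_{\opt}$ be a minimum $k$-way split, and let $S_i := S_{\opt}\setminus\bigcup_{j<i}C_j$, a $b_i$-way split in $G_i$.

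The bulk of the work is to verify, with $\delta := 2\epsilon$, the hypotheses of Theorem~\ref{thm:union} on these splits. Sparsity of each $C_i$ with $i<l$ is immediate from step~4 of Algorithm~\ref{alg:ptas}: the picked split has minimum density among all splits of $G_i$ of separation degree at most $h(\epsilon)$, which already contains every split of separation degree at most $a_i+1$. For the density bound on first-phase splits, a component-counting argument shows $b_i \ge k-comp(G_i)+1 \ge h(\epsilon)(2+1/\epsilon)+1 \ge h(\epsilon)$, so Theorem~\ref{thm:low-density} applied inside $G_i$ produces a witness split of separation degree at most $h(\epsilon)$ and density at most $(1+\epsilon)w(S_i)/b_i$; the greedy choice $C_i$ is no worse, hence $\dens{G_i}(C_i)\le (1+\epsilon)w(S_i)/b_i \le (1+2\epsilon)w(S_i)/b_i$. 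For $C_l$, the exit condition together with the fact that every first-phase split raises the separation degree of $C$ by at most $h(\epsilon)-1$ pins $comp(G_l)$ into the interval $[k-h(\epsilon)(2+1/\epsilon)+1,\,k-h(\epsilon)(1+1/\epsilon)-1]$; in particular, the separation degree of $C_l$ in $G_l$ is at least $h(\epsilon)(1+1/\epsilon)+2$, and the same counting gives $b_l \ge k-comp(G_l)+1$, which equals the separation degree of $C_l$. Lemma~\ref{lem:lastsplitptas}, applied in $G_l$ with $S_l$ playing the role of the ``$k$-way split'' there, therefore yields $\dens{G_l}(C_l) \le (1+2\epsilon)w(S_l)/b_l$.

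A direct invocation of Theorem~\ref{thm:union} with $\delta=2\epsilon$ then combines these estimates into $\dens{G}\bigl(\bigcup_{i=1}^l C_i\bigr) \le (1+2\epsilon)\,w(S_{\opt})/k$, whence $w\bigl(\bigcup_i C_i\bigr) = (k-1)\,\dens{G}\bigl(\bigcup_i C_i\bigr) \le (1+2\epsilon)\,w(S_{\opt})$, i.e.\ the claimed $(1+2\epsilon)$-approximation. The main subtlety, as just flagged, is the separation-degree bookkeeping: both Theorem~\ref{thm:low-density} and Lemma~\ref{lem:lastsplitptas} carry threshold hypotheses on $b_i$ that the loop-exit criterion $k-h(\epsilon)(2+1/\epsilon)$ is tuned precisely to guarantee, and once these thresholds are verified the rest of the argument is a purely mechanical appeal to already-established machinery.
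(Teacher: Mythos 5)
Your proposal is correct and follows essentially the same route as the paper: split into the base case where the while-loop never executes, then apply Theorem~\ref{thm:low-density} to the first-phase splits, Lemma~\ref{lem:lastsplitptas} to the final split, and glue the estimates with Theorem~\ref{thm:union} using $\delta=2\epsilon$. The only meaningful difference is that you explicitly carry out the separation-degree bookkeeping (verifying $b_i\ge h(\epsilon)$ for $i<l$ and that $C_l$'s separation degree and $b_l$ meet the threshold hypotheses of Lemma~\ref{lem:lastsplitptas}), which the paper leaves implicit; this is a welcome tightening rather than a different argument.
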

\begin{proof}
The analysis is very similar to the that of Theorem \ref{thm:greedy-minor-free}. 
If $k \le h(\epsilon)(2+1/\epsilon) $, the algorithm finds the minimum $k$-way split at its only step. Therefore, the weight of the split returned by the algorithm is the optimal solution. Otherwise, we suppose that $k >h(\epsilon)(2+1/\epsilon)$. 

Let $S_{\opt}$ be a minimum $k$-way split. The algorithm successively finds a split with the separation degree of at most $h(\epsilon)$ that has a minimum density. The only exception is the last split that it picks which its separation degree is at least $h(\epsilon)(1+1/\epsilon)$.  

Let $C_1,C_2, \ldots, C_{l}$ be the splits picked by the algorithm, $G_i = G - \bigcup_{j=1}^{i-1} C_j$, and $S_i = S_{\opt} \setminus \bigcup_{j=1}^{i-1} C_j$ be a $b_i$-way split in $G_i$. By Theorem \ref{thm:low-density}, $\dens{G}(C_i) \le (1+\epsilon)w(S_i)/b_i$ for every $i<l$. Also by Lemma \ref{lem:lastsplitptas}, $\dens{G}(C_l) \le (1+2\epsilon)w(S_l)/b_l$. Also, all the splits $C_1,C_2, \dots, C_{l-1}$ are sparse. Let $C = \bigcup_{i=1}^l C_i$ be the $k$-cut returned by the algorithm. It follows from Theorem \ref{thm:union} that
$$\dens{G}(C) \le (1+2\epsilon)w(S_{\opt})/k \,.$$
Therefore,
\begin{align*}
w(C) \le (1+2\epsilon)w(S_{\opt}) \,. 
\qedhere
\end{align*}
\end{proof}

Theorem~\ref{thm:main-proof} establishes our second main result and proves Theorem \ref{thm:mainmain}.

\bibliographystyle{apalike} 
\bibliography{cut}
\newpage

\appendix
\section{Omitted proofs}\label{sec:appx}

\maxedge*
\begin{proof} 
For a graph $G$, we use $\eta(G)$ to denote the Hadwiger
number of $G$ which is the size of the largest complete graph that is
a minor of $G$.
Consider a complete graph $H'$ with $|V(H)|$ vertices.
This graph has $H$ as its minor, thus $G$ does not have a minor $H'$.
Therefore, $\eta(G) < |V(H')| = |V(H)|$.
It is shown in \cite{thomason1984extremal} that for every graph $G$ we have
$$
\beta(G) \le (\gamma+o(1)) (\eta(G)+1) \sqrt{\ln( \eta(G)+1)} \,, $$
where $\gamma= 0.319...$ is an explicit constant. This readily gives
\begin{align*}
\beta(G) 
\le (\gamma+o(1)) |V(H)| \sqrt{\ln |V(H)|} \,.
\qedhere
\end{align*}
\end{proof}

\ksplit*
\begin{proof}
We reduce the problem of finding a minimum $k$-way split to the minimum $k$-way cut problem. Let $G$ be the graph that we want to find a minimum $k$-way split in it. If $G$ is connected, then the $k$-way split and  $k$-way cut problems are equivalent. Otherwise, we suppose that $G$ is disconnected. 

We construct a graph $G'$ from $G$ as follows:
\begin{itemize}
\item Add all the edges in $G$ to $G'$.
\item Create a new vertex $u$.
\item For each component in $G$, add an edge in $G'$ with the weight of
$\infty$ 
from $u$ 
to an arbitrary vertex in this component.
\end{itemize}

This procedure produces a connected graph $G'$.
Every $k$-way split in $G$ is also a $k$-way split in $G'$,
and minimum splits in $G'$ are also minimum splits in $G$ since the weights of the new edges are $\infty$, and they are not in any minimum split. Therefore, the problem of finding a minimum $k$-way split in $G$ is equivalent to finding a minimum $k$-way cut in $G'$. \citeboth{goldschmidt1994polynomial} showed that this problem can be solved in a polynomial-time for a fixed $k$.
\end{proof}

\twosplit*
\begin{proof}
Let $U$ be the set of isolated vertices in $G$. Since $G$ has $h$ connected components and $h < k$, we have $|U| \le h-1$. For every vertex $v \in V \setminus U$, let $S_v$ be a split that separates $v$ from all other vertices, i.e., $S_v = ( \{ v \}, V \setminus \{ v \})$. Because $v$ is not isolated, the separation degree of $S_v$ is at least $2$. Considering all $S_v$ splits, each edge appears in two of them. Thus,
$$
\sum_{v \in V \setminus U} S_v = 2 w(E) \,.
$$
Therefore there exists a split $S_u$ such that its weight is at most
$$
\dfrac{2w(E)}{|V \setminus U|} \,.
$$ 
Since $|U| \le h-1$,
\begin{align*}
w(S_u) \le \dfrac{2w(E)}{|V \setminus U|} \le  \dfrac{2w(E)}{k-h+1}\,.
\end{align*}
Therefore, the weight of a minimum $2$-way split is at most $2w(E)/(k-h+1)$.
\end{proof}

\end{document}